\numberwithin{figure}{section}
\newcommand{\OPT}{\mbox{\sc OPT}}
\newcommand{\SOL}{\mbox{\sc SOL}}
\def\mcP{\mathcal{P}}
\title{A $(1.4 + \epsilon)$-approximation algorithm for the $2$-{\sc Max-Duo} problem%
\footnote{This work was partially supported by NSERC Canada and NSF China.}%
\footnote{An extended abstract appears in Proceedings of the 28th International Symposium on Algorithms and Computation (ISAAC 2017).
	LIPICS 92, Article No. 66, pp. 66:1--66:12.}}
\titlerunning{$(1.4 + \epsilon)$-approximation for $2$-{\sc Max-Duo}} 
\author[1]{Yao Xu}
\author[2,1]{Yong Chen}
\author[1]{Guohui Lin\footnote{Correspondence authors.}}
\author[3]{Tian Liu}
\author[4,1]{Taibo Luo}
\author[5]{Peng Zhang$^\ddagger$}
\affil[1]{Department of Computing Science, University of Alberta.  Edmonton, Alberta T6G 2E8, Canada.
  \texttt{\{xu2,taibo,guohui\}@ualberta.ca}}
\affil[2]{Department of Mathematics, Hangzhou Dianzi University.  Hangzhou, Zhejiang 310018, China.
  \texttt{chenyong@hdu.edu.cn}}
\affil[3]{Key Laboratory of High Confidence Software Technologies (MOE),
	Institute of Software, School of Electronic Engineering and Computer Science,
	Peking University. Beijing 100871, China.
  \texttt{lt@pku.edu.cn}}
\affil[4]{Business School, Sichuan University.  Chengdu, Sichuan 610065, China.}
\affil[5]{School of Computer Science and Technology, Shandong University.
  Jinan, Shandong 250101, China.
  \texttt{algzhang@sdu.edu.cn}}
\authorrunning{Y. Xu {\it et al.}} 
\subjclass{F.2.2 Pattern matching; G.2.1 Combinatorial algorithms; G.4 Algorithm design and analysis}
\keywords{Approximation algorithm, duo-preservation string mapping, string partition, independent set}
\begin{document}

\maketitle

\begin{abstract}
The {\em maximum duo-preservation string mapping} ({\sc Max-Duo}) problem is
the complement of the well studied {\em minimum common string partition} ({\sc MCSP}) problem,
both of which have applications in many fields including text compression and bioinformatics.
$k$-{\sc Max-Duo} is the restricted version of {\sc Max-Duo}, where every letter of the alphabet occurs at most $k$ times in each of the strings,
which is readily reduced into the well known {\em maximum independent set} ({\sc MIS}) problem on a graph of maximum degree $\Delta \le 6(k-1)$.
In particular, $2$-{\sc Max-Duo} can then be approximated arbitrarily close to $1.8$ using the state-of-the-art approximation algorithm for the {\sc MIS} problem.
$2$-{\sc Max-Duo} was proved APX-hard and very recently a $(1.6 + \epsilon)$-approximation was claimed, for any $\epsilon > 0$.
In this paper, we present a vertex-degree reduction technique,
based on which,
we show that $2$-{\sc Max-Duo} can be approximated arbitrarily close to $1.4$.
\end{abstract}

\section{Introduction}
The {\em minimum common string partition} ({\sc MCSP}) problem is a well-studied string comparison problem in computer science,
with applications in fields such as text compression and bioinformatics.
{\sc MCSP} was first introduced by Goldstein {\it et al.} \cite{GKZ04}, and can be defined as follows:
Consider two length-$n$ strings $A = (a_1, a_2, \ldots, a_n)$ and $B = (b_1, b_2, \ldots, b_n)$ over some alphabet $\Sigma$, such that $B$ is a permutation of $A$.
Let $\mcP_A$ be a {\em partition} of $A$, which is a multi-set of substrings whose concatenation in a certain order becomes $A$.
The {\em cardinality} of $\mcP_A$ is the number of substrings in $\mcP_A$.
The {\sc MCSP} problem asks to find a minimum cardinality partition $\mcP_A$ of $A$ which is also a partition of $B$.
$k$-{\sc MCSP} denotes the restricted version of {\sc MCSP} where every letter of the alphabet $\Sigma$ occurs at most $k$ times in each of the two strings.

Goldstein {\it et al.} \cite{GKZ04} have shown that the {\sc MCSP} problem is NP-hard and APX-hard, even when $k = 2$.
There have been several approximation algorithms \cite{CZF05,CKS04,CM07,GKZ04,KW06,KW07} presented since 2004, 
among which the current best result is an $O(\log n \log^* n)$-approximation for the general {\sc MCSP} and an $O(k)$-approximation for $k$-{\sc MCSP}.
On the other hand, {\sc MCSP} is proved to be {\em fixed parameter tractable} (FPT), with respect to $k$ and/or the cardinality of the optimal partition \cite{Dam08,JZZ12,BFK13,BK14}.

An ordered pair of consecutive letters in a string is called a {\em duo} of the string \cite{GKZ04},
which is said to be {\em preserved} by a partition if the pair resides inside a substring of the partition.
Therefore, a length-$\ell$ substring in the partition {\em preserves} $\ell-1$ duos of the string.
With the complementary objective to that of {\sc MCSP},
the problem of maximizing the number of duos preserved in the common partition is referred to as
the {\em maximum duo-preservation string mapping} problem by Chen {\it et al.} \cite{CCS14}, denoted as {\sc Max-Duo}.
Analogously, $k$-{\sc Max-Duo} is the restricted version of {\sc Max-Duo}
where every letter of the alphabet $\Sigma$ occurs at most $k$ times in each string.
In this paper, we focus on $2$-{\sc Max-Duo}, to design an improved approximation algorithm.

Along with {\sc Max-Duo}, Chen {\it et al.} \cite{CCS14} introduced the {\em constrained maximum induced subgraph} ({\sc CMIS}) problem,
in which one is given an $m$-partite graph $G = (V_1, V_2, \ldots, V_m, E)$ with each $V_i$ having $n_i^2$ vertices arranged in an $n_i \times n_i$ matrix,
and the goal is to find $n_i$ vertices in each $V_i$ from different rows and different columns such that
the number of edges in the induced subgraph is maximized.  
$k$-{\sc CMIS} is the restricted version of {\sc CMIS} where $n_i \le k$ for all $i$.
Given an instance of {\sc Max-Duo},
we may construct an instance of {\sc CMIS} by setting $m$ to be the number of distinct letters in the string $A$,
and $n_i$ to be the number of occurrences of the $i$-th distinct letter;
the vertex in the $(s, t)$-entry of the $n_i \times n_i$ matrix ``means''
mapping the $s$-th occurrence of the $i$-th distinct letter in the string $A$ to its $t$-th occurrence in the string $B$;
and there is an edge between a vertex of $V_i$ and a vertex of $V_j$ if
the two corresponding mappings together preserve a duo.
Therefore, {\sc Max-Duo} is a special case of {\sc CMIS}, and furthermore $k$-{\sc Max-Duo} is a special case of $k$-{\sc CMIS}.
Chen {\it et al.} \cite{CCS14} presented a $k^2$-approximation for $k$-{\sc CMIS} and a $2$-approximation for $2$-{\sc CMIS},
based on a linear programming and randomized rounding techniques.
These imply that $k$-{\sc Max-Duo} can also be approximated within a ratio of $k^2$ and $2$-{\sc Max-Duo} can be approximated within a ratio of $2$.

Alternatively, an instance of the $k$-{\sc Max-Duo} problem with the two strings $A = (a_1, a_2, \ldots, a_n)$ and $B = (b_1, b_2, \ldots, b_n)$
can be viewed as a bipartite graph $H = (A, B, F)$, constructed as follows:
The vertices in $A$ and $B$ are $a_1, a_2, \ldots, a_n$ in order and $b_1, b_2, \ldots, b_n$ in order, respectively,
and there is an edge between $a_i$ and $b_j$ if they are the same letter.
The two edges $(a_i, b_j), (a_{i+1}, b_{j+1}) \in F$ are called a pair of {\em parallel} edges.
This way, a common partition of the strings $A$ and $B$ corresponds one-to-one to a perfect matching in $H$,
and the number of duos preserved by the partition is exactly the number of pairs of parallel edges in the matching.

Moreover, from the bipartite graph $H = (A, B, F)$,
we can construct another graph $G = (V, E)$ in which every vertex of $V$ corresponds to a pair of parallel edges of $F$,
and there is an edge between two vertices of $V$ if the two corresponding pairs of parallel edges of $F$ {\em cannot} co-exist in any perfect matching of $H$ 
(called {\em conflicting}, which can be determined in constant time; see Section \ref{sec2} for more details).
This way, one easily sees that a set of duos that can be preserved together, by a perfect matching of $H$,
corresponds one-to-one to an independent set of $G$ \cite{GKZ04,BKL14}.
Therefore, the {\sc Max-Duo} problem can be cast as a special case of the well-known {\em maximum independent set} ({\sc MIS}) problem~\cite{GJ79};
furthermore, Boria {\it et al.} \cite{BKL14} showed that in such a reduction,
an instance of $k$-{\sc Max-Duo} gives rise to a graph with a maximum degree $\Delta \le 6(k-1)$.
It follows that the state-of-the-art $\big((\Delta + 3)/{5} + \epsilon\big)$-approximation algorithm for {\sc MIS} \cite{BF99}, for any $\epsilon > 0$,
is a ${\big((6k-3)}/{5} + \epsilon\big)$-approximation algorithm for $k$-{\sc Max-Duo}.
Especially, $2$-{\sc Max-Duo} can now be better approximated within a ratio of $1.8 + \epsilon$.
Boria {\it et al.} \cite{BKL14} proved that $2$-{\sc Max-Duo} is APX-hard, similar to $2$-{\sc MCSP} \cite{GKZ04},
via a linear reduction from {\sc MIS} on cubic graphs.
For {\sc MIS} on cubic graphs, it is NP-hard to approximate within $1.00719$ \cite{BK99}.
Besides, Boria {\it et al.} \cite{BKL14} claimed that $2$-{\sc Max-Duo} can be approximated within $1.6 + \epsilon$, for any $\epsilon > 0$.

Recently, Boria {\it et al.} \cite{BCC16} presented a local search $3.5$-approximation for the general {\sc Max-Duo} problem.
In the meantime, Brubach \cite{Bru16} presented a $3.25$-approximation using a novel {\em combinatorial triplet matching}.
{\sc Max-Duo} has also been proved to be FPT by Beretta {\it et al.} \cite{BCD16}, with respect to the number of preserved duos in the optimal partition.
Most recently, two local search algorithms were independently designed for the general {\sc Max-Duo} problem at the same time,
achieving approximation ratios of $2.917$ \cite{XCLL17a} and $2 + \epsilon$ \cite{DGO17} for any $\epsilon > 0$, respectively.
They both exceed the previously the best ${\big((6k-3)}/{5} + \epsilon\big)$-approximation algorithm for $k$-{\sc Max-Duo}, when $k \ge 3$.
In this paper, we focus on the $2$-{\sc Max-Duo} problem;
using the above reduction to the {\sc MIS} problem, we present a vertex-degree reduction scheme and design an improved $(1.4 + \epsilon)$-approximation,
for any $\epsilon > 0$.

The rest of the paper is organized as follows.
We provide some preliminaries in Section \ref{sec2}, including several important structural properties of the graph constructed from the two given strings.
The vertex-degree reduction scheme is also presented as a separate subsection in Section \ref{sec2}.
The new approximation algorithm, denoted as {\sc Approx}, is presented in Section \ref{sec3},
where we show that it is a $(1.4 + \epsilon)$-approximation for $2$-{\sc Max-Duo}.
We conclude the paper in Section \ref{sec5}.

\section{Preliminaries}
\label{sec2}
Consider an instance of the $k$-{\sc Max-Duo} problem with two length-$n$ strings $A = (a_1, a_2, \ldots, a_n)$ and $B = (b_1, b_2, \ldots, b_n)$
such that $B$ is a permutation of $A$.
Recall that we can view the instance as a bipartite graph $H = (A, B, F)$,
where the vertices in $A$ and $B$ are $a_1, a_2, \ldots, a_n$ in order and $b_1, b_2, \ldots, b_n$ in order, respectively,
and there is an edge between $a_i \in A$ and $b_j \in B$ if they are the same letter, denoted as $e_{i, j}$.
See Figure~\ref{fig21a} for an example, where $A = (a, b, c, d, e, f, b, c, d, e)$ and $B = (f, b, c, d, e, a, b, c, d, e)$.
Note that $|F| \le kn$, and so $H$ can be constructed in $O(n^2)$ time.

\begin{figure}[htb]
\centering
\begin{subfigure}{0.45\textwidth}
\includegraphics[width=\linewidth]{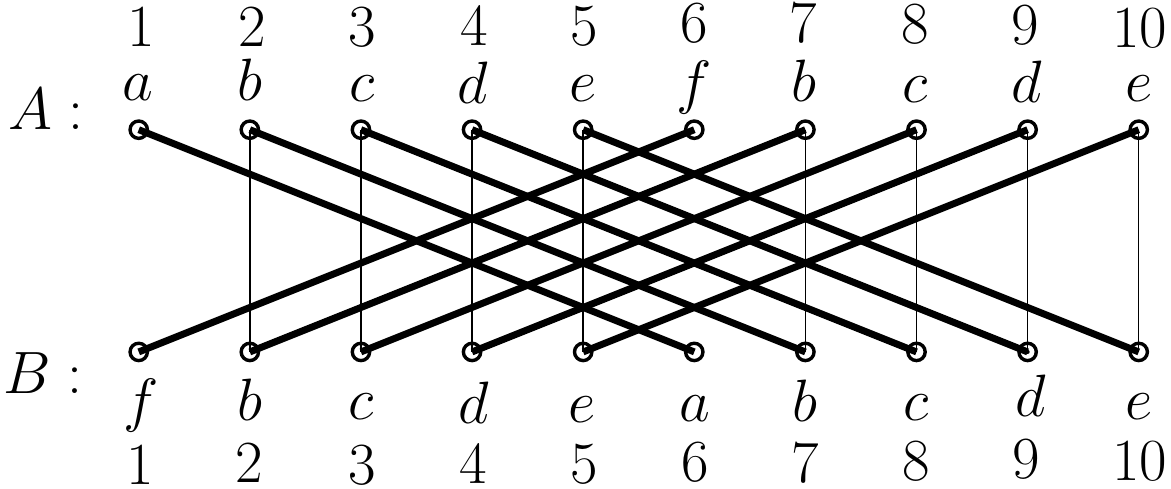}
\caption{The bipartite graph $H = (A, B, F)$, where the ten edges in bold form a perfect matching.\label{fig21a}}
\end{subfigure}

\vskip 0.2in
\begin{subfigure}{0.6\textwidth}
\includegraphics[width=\linewidth]{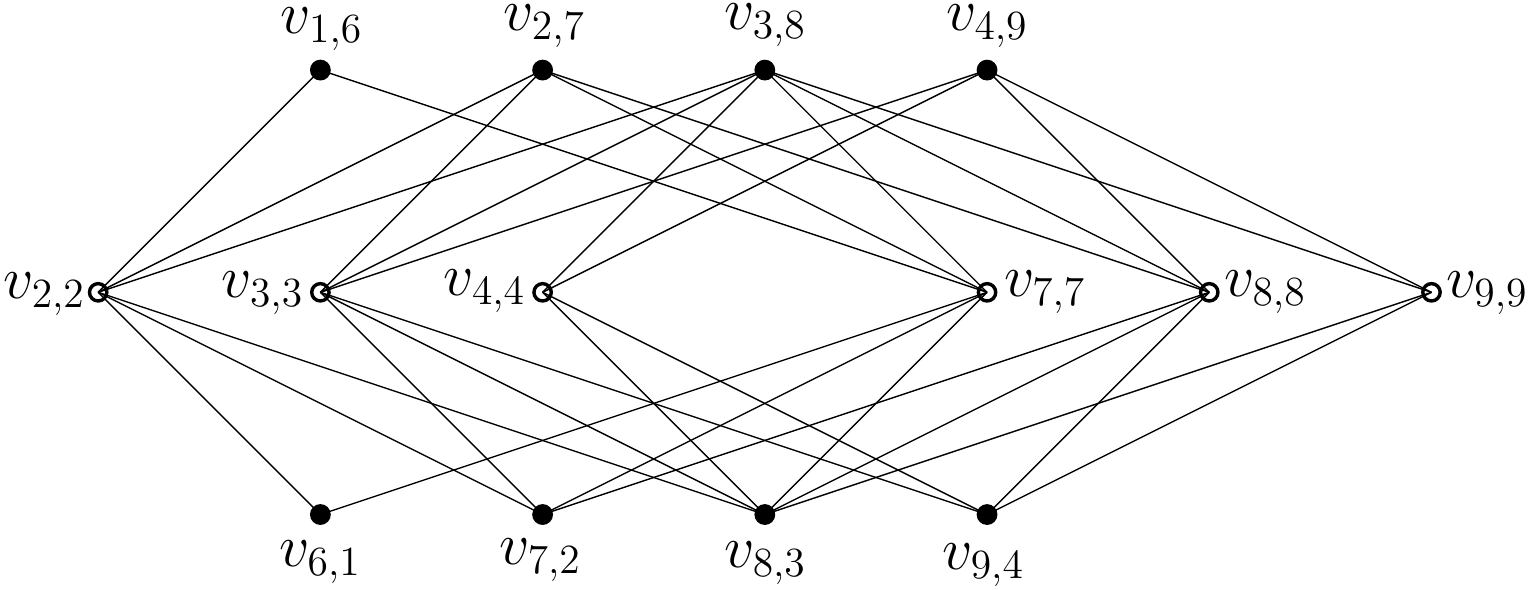}
\caption{The instance graph $G = (V, E)$ of {\sc MIS}, where the eight filled vertices form an independent set.\label{fig21b}}
\end{subfigure}
\caption{An instance of the $k$-{\sc Max-Duo} problem with $A = (a, b, c, d, e, f, b, c, d, e)$ and $B = (f, b, c, d, e, a, b, c, d, e)$.
	\cref{fig21a} is the graphical view as a bipartite graph $H = (A, B, F)$,
	where a perfect matching consisting of the ten bold edges form into eight pairs of parallel edges,
	corresponding to the eight preserved duos $(a, b), (b, c), (c, d), (d, e), (f, b), (b, c), (c, d)$ and $(d, e)$.
	\cref{fig21b} shows the instance graph $G = (V, E)$ of {\sc MIS} constructed from $H$,
	where the independent set $\{v_{1, 6}, v_{2, 7}, v_{3, 8}, v_{4, 9}, v_{6, 1}, v_{7, 2}, v_{8, 3}, v_{9, 4}\}$
	corresponds to the eight pairs of parallel edges shown in \cref{fig21a}, and consequently also corresponds to the eight preserved duos.
	In this instance, we have $k = 2$.
	Any maximum independent set of $G$ must contain some of the degree-$6$ vertices,
	invalidating the $(1.6 + \epsilon)$-approximation for $2$-{\sc Max-Duo} proposed in \cite{BKL14}.\label{fig21}}
\end{figure}

The two edges $e_{i, j}, e_{i+1, j+1} \in F$ are called a pair of {\em parallel} edges (and they are said to be parallel to each other);
when both are included in a perfect matching of $H$, the corresponding duo $(a_i, a_{i+1})$ of $A$ is preserved.
Two pairs of parallel edges are {\em conflicting} if they cannot co-exist in any perfect matching of $H$.
This motivates the following reduction from the $k$-{\sc Max-Duo} problem to the {\sc MIS} problem:
From the bipartite graph $H = (A, B, F)$, we construct another graph $G = (V, E)$ in which
a vertex $v_{i, j}$ of $V$ corresponds to the pair of parallel edges $(e_{i, j}, e_{i+1, j+1})$ of $F$;
two vertices of $V$ are {\em conflicting} if and only if the two corresponding pairs of parallel edges are conflicting,
and two conflicting vertices of $V$ are adjacent in $G$.
One can see that a set of duos of $A$ that can be preserved all together,
a set of pairwise non-conflicting pairs of parallel edges of $F$,
and an independent set in $G$,
are equivalent to each other.
See Figure \ref{fig21b} for an example of the graph $G = (V, E)$ constructed from the bipartite graph $H$ shown in Figure \ref{fig21a}.
We note that $|V| \le k (n - 1)$ and thus $G$ can be constructed in $O(k^2 n^2)$ time from the instance of the $k$-{\sc Max-Duo} problem.

In the graph $G$, for any $v \in V$, we use $N(v)$ to denote the set of its neighbors, that is, the vertices adjacent to $v$.
The two ordered letters in the duo corresponding to the vertex $v$ is referred to as the {\em letter content} of $v$.
For example, in Figure \ref{fig21b}, the letter content of $v_{1,6}$ is ``$ab$'' and the letter content of $v_{6,1}$ is ``$fb$''.

Recall from the construction that there is an edge $e_{i,j}$ in the graph $H = (A, B, F)$ if $a_i = b_j$,
and there is a vertex $v_{i,j}$ in the graph $G = (V, E)$ if the parallel edges $e_{i,j}$ and $e_{i+1,j+1}$ are in $H = (A, B, F)$.

\begin{lemma}
\label{lemma21}
The graph $G = (V, E)$ has the following properties.
\begin{enumerate}
\item
	If $v_{i, j}$, $v_{i+2, j+2} \in V$, then $v_{i+1, j+1} \in V$.
\item
	Given any subset of vertices $V' \subset V$, 
	let $F' = \{e_{i, j} | v_{i, j} \in V'\}$,
	$A' = \{a_i | e_{i, j} \in F'\}$, 
	and $B' = \{b_j | e_{i, j} \in F'\}$.
	If the subgraph $H' = (A', B', F')$ in $H$ is connected,
	then all the vertices of $V'$ have the same letter content;
	and consequently for any two vertices $v_{i, j}, v_{h, \ell} \in V'$, we have both $v_{h, j}, v_{i, \ell} \in V$.
\item
	For any $v_{i, j} \in V$, we have
	\begin{equation}
	\label{eq1}
	N(v_{i, j}) = 
	\bigcup_{p = -1, 0, 1} \{v_{i'+p, j+p} \in V \mid i' \ne i\} \cup
	\bigcup_{p = -1, 0, 1} \{v_{i+p, j'+p} \in V \mid j' \ne j\}.
	\end{equation}
\end{enumerate}
\end{lemma}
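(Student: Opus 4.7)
Part 1 is immediate from the definitions: $v_{i, j} \in V$ forces $e_{i+1, j+1} \in F$ and $v_{i+2, j+2} \in V$ forces $e_{i+2, j+2} \in F$, and these two edges form a parallel pair witnessing $v_{i+1, j+1} \in V$.

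For part 2, the plan is to show that the letter content is invariant along any walk in $H'$, so that connectedness forces all vertices of $V'$ to share the same content. The key observation is that two distinct edges $e_{i, j}, e_{i', j'} \in F'$ share an endpoint in $H'$ iff $i = i'$ or $j = j'$. When $i = i'$ the duos $(a_i, a_{i+1})$ and $(a_{i'}, a_{i'+1})$ coincide by definition; when $j = j'$, one has $a_i = b_j = b_{j'} = a_{i'}$, and since $v_{i, j}, v_{i', j'} \in V$ also yield $e_{i+1, j+1}, e_{i'+1, j'+1} \in F$, one obtains $a_{i+1} = b_{j+1} = b_{j'+1} = a_{i'+1}$. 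Once every vertex of $V'$ carries the same letter content, the consequence follows by direct verification: for any $v_{i, j}, v_{h, \ell} \in V'$, the equalities $a_i = a_h = b_\ell$ and $a_{i+1} = a_{h+1} = b_{\ell+1}$ place $e_{i, \ell}$ and $e_{i+1, \ell+1}$ in $F$, so $v_{i, \ell} \in V$, and $v_{h, j} \in V$ follows symmetrically.

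For part 3, the plan is to characterize adjacency from first principles. The parallel pairs $(e_{i, j}, e_{i+1, j+1})$ and $(e_{h, \ell}, e_{h+1, \ell+1})$ are conflicting iff some two distinct edges among the (at most) four involved share an endpoint in $H$. Writing $(\alpha, \beta) = (h - i, \ell - j)$ and working through the four cross-comparisons, each of which reduces to an ``equal in one coordinate, different in the other'' test, I expect to conclude that $v_{i, j}$ and $v_{h, \ell}$ are adjacent in $G$ exactly when $\alpha \ne \beta$ and at least one of $\alpha, \beta$ lies in $\{-1, 0, 1\}$. Reparameterizing the two sub-unions on the right-hand side of~(\ref{eq1}) --- the first by setting $\ell - j = p$ (so $\beta = p \in \{-1, 0, 1\}$ and the condition $i' \ne i$ enforces $\alpha \ne \beta$), the second by setting $h - i = p$ --- recovers exactly this set of neighbors. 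The main obstacle is purely bookkeeping: one must confirm the enumeration is exhaustive and carefully exclude the diagonal positions $\alpha = \beta \in \{-1, 1\}$, where the two parallel pairs share an entire edge but no two distinct edges share an endpoint.
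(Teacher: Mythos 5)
Your proposal is correct and follows essentially the same route as the paper: part 1 is the identical two-line argument, part 2's walk-invariance of the letter content is just a local restatement of the paper's observation that each connected component of $H$ is a complete bipartite graph on a single letter, and part 3's offset bookkeeping with $(\alpha,\beta)=(h-i,\ell-j)$ is a reparameterization of the paper's enumeration of the six kinds of shared endpoints among $a_i, a_{i+1}, b_j, b_{j+1}$ (your exclusion of the diagonal $\alpha=\beta\in\{-1,1\}$ correctly handles the non-conflicting consecutive pairs, and your claimed characterization matches the right-hand side of~(\ref{eq1})). No gaps worth flagging.
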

\begin{proof}
By definition, $v_{i, j} \in V$ if and only if $e_{i, j}, e_{i+1, j+1} \in F$.
\begin{enumerate}
\item
	If also $v_{i+2, j+2} \in V$, that is, $e_{i+2, j+2}$, $e_{i+3, j+3} \in F$,
	then $e_{i+1, j+1}, e_{i+2, j+2} \in F$ leading to $v_{i+1, j+1} \in V$.

\item 
	Note that an edge $e_{i, j} \in F$ if and only if the two vertices $a_i$ and $b_j$ are the same letter, 
	and clearly each connected component in $H$ is complete bipartite and all the vertices are the same letter.
	It follows that if the induced subgraph $H' = (A', B', F')$ in $H$ is connected,
	then all its vertices are the same letter;
	furthermore, all the duos starting with these vertices have the same letter content;
	and therefore for any two vertices $v_{i, j}, v_{h, \ell} \in V'$, both $v_{h, j}, v_{i, \ell} \in V$.

\item 
	For any vertex $v_{i,j}$, or equivalently the pair of parallel edges $(e_{i, j}, e_{i+1, j+1})$ in $F$,
	which are incident at four vertices $a_i, a_{i+1}, b_j, b_{j+1}$,
	a conflicting pair of parallel edges can be one of the six kinds:
	to share exactly one of the four vertices $a_i, a_{i+1}, b_j, b_{j+1}$,
	to share exactly two vertices $a_i$ and $a_{i+1}$, and
	to share exactly two vertices $b_j$ and $b_{j+1}$.
	The sets of these six kinds of conflicting pairs are as described in the lemma, for example, 
	$\{v_{i'-1, j-1} \in V \mid i' \ne i\}$ is the set of conflicting pairs each sharing only the vertex $b_j$ with the pair $v_{i, j}$.
\end{enumerate}
\end{proof}

From Lemma~\ref{lemma21} and its proof, we see that for any vertex of $V$ there are at most $k-1$ conflicting vertices of each kind
(corresponding to a set in \cref{eq1}).
We thus have the following corollary.

\begin{corollary}
\label{coro22}
The maximum degree of the vertices in $G = (V, E)$ is $\Delta \le 6(k-1)$.
\end{corollary}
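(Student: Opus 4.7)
My plan is to derive the corollary directly from part 3 of \cref{lemma21}, which expresses $N(v_{i,j})$ as a union of six sets. The strategy is to bound the cardinality of each of those six sets by $k-1$, using the defining constraint of $k$-{\sc Max-Duo} that every letter of $\Sigma$ appears at most $k$ times in each of $A$ and $B$, and then sum.

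First I would fix $v_{i,j} \in V$ and look at a representative set, say $S_p := \{v_{i'+p, j+p} \in V \mid i' \ne i\}$ for one of the three values $p \in \{-1,0,1\}$. By definition of $V$, membership $v_{i'+p, j+p} \in V$ requires in particular that the edge $e_{i'+p, j+p}$ lies in $F$, which happens iff the letter $a_{i'+p}$ equals the letter $b_{j+p}$. Since the index $i'+p$ is determined by $i'$ and $p$ is fixed, distinct choices of $i'$ give distinct positions in $A$ at which the specific letter $b_{j+p}$ occurs. The $k$-{\sc Max-Duo} restriction caps the total number of such positions at $k$; excluding the value $i'=i$ that corresponds to $v_{i,j}$ itself leaves at most $k-1$ candidates, so $|S_p| \le k-1$. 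An entirely symmetric argument, swapping the roles of $A$ and $B$, bounds each of the three sets $\{v_{i+p, j'+p} \in V \mid j' \ne j\}$ by $k-1$ as well, using the bound on the number of occurrences of the letter $a_{i+p}$ in $B$.

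Adding the six bounds gives $|N(v_{i,j})| \le 6(k-1)$, and since $v_{i,j}$ was arbitrary this yields $\Delta \le 6(k-1)$, as claimed.

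I do not anticipate any real obstacle: the geometric work has already been done in \cref{lemma21}, and the only new ingredient is the per-letter occurrence cap, which is immediate from the definition of $k$-{\sc Max-Duo}. The one subtlety worth checking is that the six index sets in \cref{eq1} are the right ones to bound individually (as opposed to possibly over-counting), but since the corollary only asserts an upper bound on $\Delta$, summing the six individual bounds is valid regardless of whether the six sets overlap.
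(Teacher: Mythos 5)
Your approach is exactly the paper's: its proof of \cref{coro22} is the single observation that each of the six sets in \cref{eq1} contains at most $k-1$ vertices, which is what you set out to show. There is, however, one small slip in your counting for the two sets with $p=-1$. You bound $\{v_{i'-1,j-1}\in V \mid i'\ne i\}$ by noting that $e_{i'-1,j-1}\in F$ forces $a_{i'-1}=b_{j-1}$, so that $i'-1$ ranges over the at most $k$ occurrences of the letter $b_{j-1}$ in $A$; but excluding $i'=i$ only reduces this count from $k$ to $k-1$ if position $i-1$ is itself one of those occurrences, i.e.\ if $a_{i-1}=b_{j-1}$ --- and that is not guaranteed by $v_{i,j}\in V$, which only gives $a_i=b_j$ and $a_{i+1}=b_{j+1}$. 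As literally written, your argument yields only $\le k$ for the two $p=-1$ sets, hence $\Delta\le 6k-4$ rather than $6(k-1)$. The repair is immediate: for $p=-1$ use the \emph{other} edge of the parallel pair, namely $e_{i',j}\in F$, which forces $a_{i'}=b_j$; since $a_i=b_j$ is one of the at most $k$ occurrences of that letter in $A$, excluding $i'=i$ genuinely leaves at most $k-1$ candidates. (This is precisely the ``sharing only the vertex $b_j$'' description in the paper's proof of \cref{lemma21}, and the symmetric fix via $a_i$ handles the set $\{v_{i-1,j'-1}\in V\mid j'\ne j\}$.) With that one-line adjustment your argument is complete and identical in substance to the paper's.
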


\subsection{When $k = 2$}
We examine more properties for the graph $G = (V, E)$ when $k = 2$.
First, from Corollary~\ref{coro22} we have $\Delta \le 6$.

Berman and Fujito \cite{BF99} have presented an approximation algorithm with a performance ratio arbitrarily close to ${(\Delta+3)}/{5}$ for the {\sc MIS} problem,
on graphs with maximum degree $\Delta$.
This immediately implies a $(1.8 + \epsilon)$-approximation for $2$-{\sc Max-Duo}.
Our goal is to reduce the maximum degree of the graph $G = (V, E)$ to achieve a better approximation algorithm.
To this purpose, we examine all the degree-$6$ and degree-$5$ vertices in the graph $G$,
and show a scheme to safely remove them from consideration when computing an independent set.
This gives rise to a new graph $G_2$ with maximum degree at most $4$, leading to a desired $(1.4 + \epsilon)$-approximation for $2$-{\sc Max-Duo}.

We remark that, in our scheme we first remove the degree-$6$ vertices from $G$ to compute an independent set,
and later we add half of these degree-$6$ vertices to the computed independent set to become the final solution.
Contrary to the claim that there always exists a maximum independent set in $G$ containing no degree-$6$ vertices~\cite[Lemma 1]{BKL14},
the instance in \cref{fig21} shows that any maximum independent set for the instance must contain some degree-$6$ vertices,
thus invalidating the $(1.6 + \epsilon)$-approximation for $2$-{\sc Max-Duo} proposed in \cite{BKL14}.

In more details, the instance of $2$-{\sc Max-Duo}, illustrated in \cref{fig21},
consists of two length-$10$ strings $A = (a, b, c, d, e, f, b, c, d, e)$ and $B = (f, b, c, d, e, a, b, c, d, e)$.
The bipartite graph $H = (A, B, F)$ is shown in \cref{fig21a} and the instance graph $G = (V, E)$ of the {\sc MIS} problem is shown in \cref{fig21b}.
In the graph $G$, we have six degree-$6$ vertices: $v_{2, 2}, v_{7, 7}, v_{3, 3}, v_{3, 8}, v_{8, 3}$ and $v_{8, 8}$.
One can check that $\{v_{1, 6}, v_{2, 7}$, $v_{3, 8}, v_{4, 9}$, $v_{6, 1}, v_{7, 2}$, $v_{8, 3}, v_{9, 4}\}$ is an independent set in $G$, of size $8$.
On the other hand, if none of these degree-$6$ vertices is included in an independent set,
then because the four vertices $v_{4, 4}, v_{4, 9}, v_{9, 4}, v_{9, 9}$ form a square implying that at most two of them can be included in the independent set,
the independent set would be of size at most $6$, and thus can never be a maximum independent set in $G$.

Consider a duo $(a_i, a_{i+1})$ of the string $A$ and for ease of presentation assume its letter content is ``$ab$''.
If no duo of the string $B$ has the same letter content ``$ab$'',
then this duo of the string $A$ can never be preserved;
in fact this duo does not even become (a part of) a vertex of $V$ of the graph $G$.
If there is exactly one duo $(b_j, b_{j+1})$ of the string $B$ having the same letter content ``$ab$'',
then these two duos make up a vertex $v_{i, j} \in V$,
and from Lemma~\ref{lemma21} we know that the degree of the vertex $v_{i, j} \in V$ is at most $5$,
since there is no such vertex $v_{i, j'}$ with $j' \ne j$ sharing exactly the two letters $a_i$ and $a_{i+1}$ with $v_{i, j}$.
Therefore, if the degree of the vertex $v_{i, j} \in V$ is six,
then there must be two duos of the string $A$ and two duos of the string $B$ having the same letter content ``$ab$''.
Assume the other duo of the string $A$ and the other duo of the string $B$ having the same letter content ``$ab$''
are $(a_{i'}, a_{i'+1})$ and $(b_{j'}, b_{j'+1})$, respectively.
Then all four vertices $v_{i, j}, v_{i, j'}, v_{i', j}, v_{i', j'}$ exist in $V$.
We call the subgraph of $G$ induced on these four vertices a {\em square},
and denote it as $S(i,i';j,j') = (V(i,i';j,j'), E(i,i';j,j'))$,
where $V(i,i';j,j') = \{v_{i, j}, v_{i, j'}, v_{i', j}, v_{i', j'}\}$ and
$E(i,i';j,j') = \{(v_{i, j}, v_{i, j'}), (v_{i, j}, v_{i', j}), (v_{i', j'}, v_{i, j'}), (v_{i', j'}, v_{i', j})\}$ due to their conflicting relationships.
One clearly sees that every square has a unique letter content, which is the letter content of its four member vertices.

In \cref{fig21b}, there are three squares $S(2,7; 2,7)$, $S(3,8; 3, 8)$ and $S(4,9; 4,9)$,
with their letter contents ``$bc$'', ``$cd$'' and ``$de$'', respectively.
The above argument says that every degree-$6$ vertex of $V$ must belong to a square, but the converse is not necessarily true,
for example, all vertices of the square $S(4,9; 4,9)$ have degree $4$.
We next characterize several properties of a square.

The following lemma is a direct consequence of how the graph $G$ is constructed and the fact that $k = 2$.

\begin{lemma}
\label{lemma23}
In the graph $G = (V, E)$ constructed from an instance of $2$-{\sc Max-Duo},
\begin{enumerate}
\item 
	for each index $i$, there are at most two distinct $j$ and $j'$ such that $v_{i, j}, v_{i, j'} \in V$;
\item 
	if $v_{i, j}, v_{i, j'} \in V$ where $j' \ne j$, and $v_{i+1, j''+1} \in V$ (or symmetrically, $v_{i-1, j''-1} \in V$),
	then either $j'' = j$ or $j'' = j'$.
\end{enumerate}
\end{lemma}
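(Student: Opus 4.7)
The plan is to derive both parts directly from the definitions of $V$ in terms of $F$ together with the hypothesis $k=2$, which caps the number of occurrences of any letter in $B$ at two. Recall that $v_{i,j} \in V$ iff $e_{i,j}, e_{i+1,j+1} \in F$, i.e., $a_i = b_j$ and $a_{i+1}=b_{j+1}$. So every vertex of $V$ with first index $i$ corresponds to a choice of a position $j$ in $B$ such that $b_j = a_i$ and $b_{j+1}=a_{i+1}$.

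For part 1, I would fix $i$ and consider the set $J_i = \{ j : v_{i,j}\in V\}$. Any $j \in J_i$ satisfies $b_j = a_i$. Since $k=2$, the letter $a_i$ occurs in $B$ at most twice, so there are at most two positions $j$ with $b_j = a_i$. Hence $|J_i|\le 2$, which is exactly part 1.

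For part 2, suppose $v_{i,j}, v_{i,j'}\in V$ with $j\neq j'$. Then $b_j = b_{j'} = a_i$ and $b_{j+1}=b_{j'+1}=a_{i+1}$, so both occurrences of $a_i$ in $B$ are precisely at positions $j$ and $j'$, and both occurrences of $a_{i+1}$ in $B$ are precisely at positions $j+1$ and $j'+1$ (the $k=2$ bound forces equality). Now if $v_{i+1,j''+1}\in V$, then $b_{j''+1}=a_{i+1}$, forcing $j''+1\in\{j+1,j'+1\}$, i.e., $j''\in\{j,j'\}$. Symmetrically, if $v_{i-1,j''-1}\in V$, then $e_{i,j''}\in F$, so $b_{j''}=a_i$, forcing $j''\in\{j,j'\}$.

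There is no real obstacle here: both parts are immediate consequences of the $k=2$ occurrence bound, combined with the observation that once two vertices $v_{i,j}$ and $v_{i,j'}$ are present in $V$ with $j\neq j'$, both the second occurrence of $a_i$ and the second occurrence of $a_{i+1}$ in $B$ are already accounted for. Thus I expect the proof to be just a few lines, the only thing to be careful about being the explicit translation between a vertex $v_{\cdot,\cdot}\in V$ and the pair of $F$-edges it represents, so the application of the $k=2$ cap is clearly on the letter occurrences in $B$ (or symmetrically in $A$ for the other orientation).
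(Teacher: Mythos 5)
Your proof is correct and matches the paper's intent exactly: the paper gives no written proof, stating only that the lemma ``is a direct consequence of how the graph $G$ is constructed and the fact that $k=2$,'' and your argument spells out precisely that consequence (the $k=2$ cap on occurrences of $a_i$ and $a_{i+1}$ in $B$).
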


\begin{lemma}
\label{lemma24}
For any square $S(i,i'; j,j')$ in the graph $G = (V, E)$,
$N(v_{i, j}) = N(v_{i', j'})$, $N(v_{i, j'}) = N(v_{i', j})$,
and $N(v_{i, j}) \cap N(v_{i, j'}) = \emptyset$.
(Together, these imply that every vertex of $V$ is adjacent to either none or exactly two of the four member vertices of a square.)
\end{lemma}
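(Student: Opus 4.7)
The plan is to combine the neighborhood formula in Lemma~\ref{lemma21}(3) with the $k=2$ restriction from Lemma~\ref{lemma23}(1) to pin down, for each of the four square vertices, an explicit list of at most six candidate neighbors, and then read off all three claims by direct comparison. I would start at $v_{i,j}$ and walk through each of the six sets in equation~(\ref{eq1}). For the ``aligned'' sets $\{v_{\xi,j} \in V : \xi \neq i\}$ and $\{v_{i,\eta} \in V : \eta \neq j\}$, Lemma~\ref{lemma23}(1) forces $\xi = i'$ and $\eta = j'$, since these are the only other row and column where a pair of letter content ``$ab$'' can begin; this produces the candidates $v_{i',j}$ and $v_{i,j'}$. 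For each of the four ``diagonal'' sets, the requirement that the candidate lie in $V$ translates, through one of its endpoint edges, into the constraint that a specific shifted position carry letter ``$a$'' or letter ``$b$'', and the $k=2$ restriction once again forces the free index to be $i'$ or $j'$. This yields
\[
N(v_{i,j}) \;=\; V \cap L_1, \quad L_1 := \{v_{i'-1,j-1},\; v_{i',j},\; v_{i'+1,j+1},\; v_{i-1,j'-1},\; v_{i,j'},\; v_{i+1,j'+1}\}.
\]

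Applying the same enumeration to $v_{i',j'}$ amounts to the swap $(i,j)\leftrightarrow(i',j')$ inside the same square and recovers exactly the set $L_1$, so $N(v_{i,j}) = N(v_{i',j'})$. The enumeration applied to $v_{i,j'}$ and to $v_{i',j}$ yields their common candidate list
\[
L_2 \;=\; \{v_{i'-1,j'-1},\; v_{i',j'},\; v_{i'+1,j'+1},\; v_{i-1,j-1},\; v_{i,j},\; v_{i+1,j+1}\},
\]
giving $N(v_{i,j'}) = N(v_{i',j})$. For the disjointness claim, an index-by-index comparison of $L_1$ with $L_2$ shows that any common element would force either $i=i'$ or $j=j'$ (or an impossible identity such as $j-1=j+1$), contradicting that the four square vertices are distinct; hence $N(v_{i,j}) \cap N(v_{i,j'}) = \emptyset$. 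The parenthetical remark follows immediately: any external vertex of $V$ belongs to at most one of the two pair-equal neighborhood sets, so it is adjacent to either no square member or exactly two of them, while each square vertex is adjacent to its two non-diagonal partners by the defining square edges and, by inspection of $L_1$ and $L_2$, not to its diagonal partner.

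The main obstacle I anticipate is the letter-content bookkeeping inside the enumeration: each of the six conflict types in equation~(\ref{eq1}) imposes a slightly different constraint on which shifted row or column must carry which letter, so I have to thread the $k=2$ restriction through each case carefully to confirm that no candidate beyond the six listed can arise.
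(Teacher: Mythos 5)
Your proof is correct and follows essentially the same route as the paper's: both arguments pin down the at-most-six neighbors of each square vertex by combining the six conflict types of Lemma~\ref{lemma21}(3) with the $k=2$ uniqueness forced by Lemma~\ref{lemma23}, and then compare across the four vertices. Your explicit candidate-list formulation is a slightly cleaner packaging that also absorbs the paper's special subcase $i+1=i'-1$ (where two candidates in $L_1$ coincide) without any extra casework.
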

\begin{proof}
Consider the two vertices $v_{i, j}$ and $v_{i', j'}$, which have common neighbors $v_{i, j'}$ and $v_{i', j}$ in the square.

Note that $v_{i, j'}$ and $v_{i, j}$ share both the letters $a_i$ and $a_{i+1}$.
If there is a vertex adjacent to $v_{i, j}$ by sharing $a_{i+1}$ but not $a_i$, then this vertex is $v_{i+1, j''+1}$ with $j'' \ne j$,
and thus it has to be $v_{i+1, j'+1}$ (by Lemma~\ref{lemma23}).
We consider two subcases:
If $i+1 = i'-1$, then $j'+1 = j-1$ due to $k = 2$.
Thus, this vertex $v_{i+1, j'+1}$ actually shares $a_{i+1}$ and $b_j$ with $v_{i, j}$;
also, it shares $a_{i'}$ and $b_{j'+1}$ with $v_{i', j'}$;
and therefore it is adjacent to $v_{i', j'}$ too, but not adjacent to $v_{i, j'}$ or $v_{i', j}$.
If $i+1 \ne i'-1$, then this vertex $v_{i+1, j'+1}$ shares only $a_{j+1}$ with the vertex $v_{i, j}$;
also it shares only $b_{j'+1}$ with $v_{i', j'}$;
and therefore it is adjacent to $v_{i', j'}$ too, but not adjacent to $v_{i, j'}$ or $v_{i', j}$.

The other three symmetric cases can be discussed exactly the same and the lemma is proved.
\end{proof}

\begin{corollary}
\label{coro25}
In the graph $G = (V, E)$,
the degree-$6$ vertices can be partitioned into pairs, 
where each pair of degree-$6$ vertices belong to a square in $G$ and they are adjacent to the same six other vertices,
two inside the square and four outside of the square.
\end{corollary}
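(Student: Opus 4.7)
The plan is to exhibit an explicit pairing on the degree-$6$ vertices and then read the common neighborhood directly off \cref{eq1}. Given any degree-$6$ vertex $v_{i,j}$, the discussion preceding the corollary already establishes that $v_{i,j}$ must lie in a square $S(i,i';j,j')$: since $v_{i,j}$ has a neighbor sharing both of $a_i,a_{i+1}$ and another sharing both of $b_j,b_{j+1}$, there must be a second duo with the same letter content in each of $A$ and $B$, and under $k=2$ the indices $i'$ and $j'$ are uniquely determined. I would then pair $v_{i,j}$ with the diagonally opposite member $v_{i',j'}$ of that square.

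By \cref{lemma24}, $N(v_{i,j}) = N(v_{i',j'})$, so $v_{i',j'}$ automatically has degree $6$ as well. Applying the same construction starting from $v_{i',j'}$ returns the same unique square and pairs it back to $v_{i,j}$, so the relation is symmetric and the pairs form a partition of the set of degree-$6$ vertices.

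To identify the six common neighbors, I would walk through the six conflict kinds enumerated in \cref{eq1}. Two of them correspond to sharing two endpoints with $v_{i,j}$ and give the ``in-square'' neighbors $v_{i,j'}$ (sharing $a_i,a_{i+1}$) and $v_{i',j}$ (sharing $b_j,b_{j+1}$). The remaining four conflict kinds each share exactly one of $a_i,a_{i+1},b_j,b_{j+1}$ with $v_{i,j}$, and therefore cannot coincide with any square member; hence they lie outside the square. A second invocation of \cref{lemma24} transfers the same neighbor set to $v_{i',j'}$.

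The main obstacle I anticipate is a small bookkeeping point: one must check that the ``diagonal'' vertex $v_{i',j'}$ itself is not adjacent to $v_{i,j}$, so that the counts $2+4=6$ are correct and no pairing collapses. Under $k=2$, if $i'$ or $j'$ differed from the corresponding index by exactly one, a short substring-equality argument (chaining the four parallel-edge identities defining the four vertices of the square) forces the same letter to appear three times in $A$, contradicting $k=2$. So the two diagonal members of a square share no endpoint with $v_{i,j}$ and are non-adjacent, and the pairing together with \cref{lemma24} delivers the corollary.
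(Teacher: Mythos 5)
Your proposal is correct and follows essentially the same route as the paper: every degree-$6$ vertex lies in a (unique) square, it is paired with the diagonally opposite member, \cref{lemma24} transfers the neighborhood, and \cref{eq1} identifies the six common neighbors as two in-square and four outside. Your extra check that the diagonal pair is non-adjacent (via $|i-i'|\ge 2$ and $|j-j'|\ge 2$ under $k=2$) is a careful touch that the paper leaves implicit in \cref{lemma24}.
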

\begin{proof}
We have seen that every degree-$6$ vertex in the graph $G$ must be in a square.
The above Lemma~\ref{lemma24} states that the four vertices of a square $S(i,i'; j,j')$ can be partitioned into two pairs,
$\{v_{i, j}, v_{i', j'}\}$ and $\{v_{i, j'}, v_{i', j}\}$,
and the two vertices inside each pair are non-adjacent to each other and have the same neighbors.
In particular, if the vertex $v_{i, j}$ in the square $S(i,i'; j,j')$ has degree $6$, then Lemma~\ref{lemma21} states that it is adjacent to the six vertices
$v_{i-1, j'-1}, v_{i, j'}, v_{i+1, j'+1}, v_{i'-1, j-1}, v_{i', j}, v_{i'+1, j+1}$ (see an illustration in \cref{fig23}).
\end{proof}

\begin{figure}[H]
\centering
\includegraphics[width=0.62\linewidth]{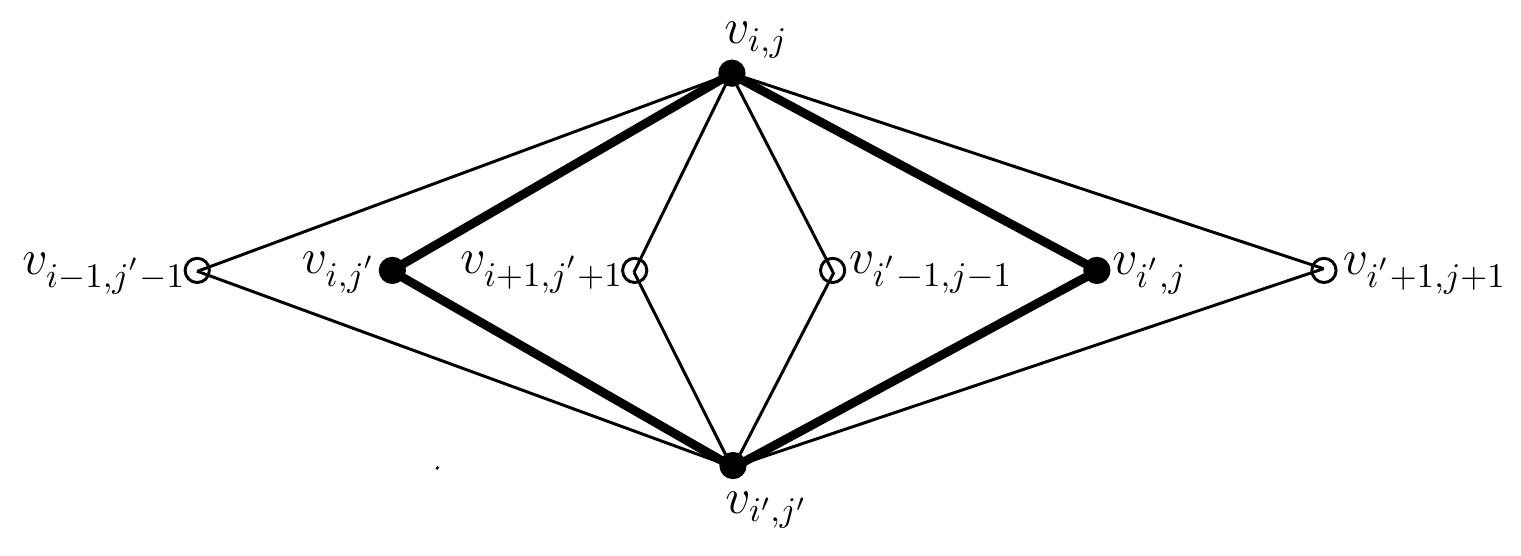}
\caption{The square $S(i,i'; j,j')$ shown in bold lines.
	The two non-adjacent vertices $v_{i, j}$ and $v_{i', j'}$ of the square form a pair stated in Corollary~\ref{coro25};
	they have $6$ common neighbors, of which two inside the square and four outside of the square.\label{fig23}}
\end{figure}

\begin{corollary}
\label{coro26}
If there is no square in the graph $G = (V, E)$, 
then every degree-$5$ vertex is adjacent to a degree-$1$ vertex.
\end{corollary}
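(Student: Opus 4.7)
The plan is to leverage the neighborhood decomposition given by equation~\eqref{eq1}. Since $k=2$, \cref{lemma23} makes each of the six sets in~\eqref{eq1} of size at most one, so a degree-$5$ vertex $v_{i,j}$ has exactly one of these sets empty. My first step is to show that the empty one must be one of the two sets corresponding to $p=0$, namely $\{v_{i', j}\in V\mid i' \ne i\}$ or $\{v_{i, j'}\in V\mid j' \ne j\}$. Otherwise both $p=0$ sets are nonempty, producing $v_{i^*, j}$ and $v_{i, j^*}$ that share the letter content of $v_{i, j}$; the equalities $a_{i^*} = a_i = b_{j^*}$ and $a_{i^*+1} = a_{i+1} = b_{j^*+1}$ then force $v_{i^*, j^*} \in V$, yielding a square that contradicts the hypothesis.

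By symmetry I may assume the empty set is $\{v_{i, j'}\in V\mid j' \ne j\}$, so the duo $(a_i, a_{i+1})$ occurs twice in $A$ (at $i$ and some $i^*$) and only once in $B$ (at $j$). The degree-$1$ witness will be the neighbor $v_{i^*, j}$. Applying~\eqref{eq1} to $v_{i^*, j}$, the $p=0$ sets at this vertex are easy: $\{v_{i'', j}\in V\mid i'' \ne i^*\} = \{v_{i, j}\}$ and $\{v_{i^*, j'}\in V\mid j' \ne j\} = \emptyset$. It remains to rule out the four sets with $p = \pm 1$.

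For the two $p = -1$ sets at $v_{i^*, j}$, each is nonempty if and only if $a_{i-1} = a_{i^*-1}$; this uses that the corresponding $p = -1$ sets at $v_{i, j}$ are themselves nonempty (since $v_{i,j}$ has degree $5$), which yields $b_{j-1} = a_{i^*-1}$ and $b_{j^*-1} = a_{i-1}$, where $j^*$ denotes the other $B$-position of $a_i$. Assuming this equality, a direct check shows that all four vertices $v_{i-1, j-1}, v_{i^*-1, j-1}, v_{i-1, j^*-1}, v_{i^*-1, j^*-1}$ belong to $V$ and form a square of letter content $(a_{i-1}, a_i)$, contradicting the hypothesis. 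A symmetric argument on the two $p = 1$ sets shows they are nonempty iff $a_{i+2} = a_{i^*+2}$, in which case a square of content $(a_{i+1}, a_{i+2})$ forms on positions $i+1, i^*+1$ in $A$ and $j+1, m$ in $B$ (where $m$ is the other $B$-occurrence of $a_{i+1}$). Both cases being excluded, $v_{i^*, j}$ has degree exactly $1$.

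The step I expect to be most delicate is the bookkeeping just sketched: for each of the six neighbor-sets of $v_{i^*, j}$ one must use \cref{lemma23}(2) to pin down the unique candidate vertex, and then translate its membership in $V$ into a letter equality at positions shifted from $i, i^*, j, j^*$ by $\pm 1$. Once those explicit conditions are in hand, the forbidden-square verifications proceed directly from \cref{lemma21}(2).
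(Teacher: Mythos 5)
Your proposal is correct and follows essentially the same route as the paper: show that one of the two $p=0$ sets in \cref{eq1} must be empty (else a square forms), take the witness $v_{i^*,j}$ to be the other occurrence of the duo in $A$, and use the no-square hypothesis to kill each remaining candidate neighbor of $v_{i^*,j}$. Your bookkeeping — reducing each $p=\pm1$ set's nonemptiness to the letter equality $a_{i-1}=a_{i^*-1}$ (resp.\ $a_{i+2}=a_{i^*+2}$) and then exhibiting the forbidden square explicitly — is if anything slightly more careful than the paper's, which phrases the same facts as letter inequalities.
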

\begin{proof}
Assume the vertex $v_{i, j}$ has degree $5$.
Due to the non-existence of any square in the graph $G$ and Lemma~\ref{lemma21},
either there is no vertex sharing exactly the two letters $a_i$ and $a_{i+1}$ with $v_{i, j}$,
or there is no vertex sharing exactly the two letters $b_j$ and $b_{j+1}$ with $v_{i, j}$.
We assume without loss of generality that there is no vertex sharing exactly the two letters $a_i$ and $a_{i+1}$ with $v_{i, j}$,
and furthermore assume $v_{i', j}$, $i' \ne i$, is the vertex sharing exactly the two letters $b_j$ and $b_{j+1}$ with $v_{i, j}$.

It follows that $N(v_{i, j}) = \{v_{i-1, j''-1}, v_{i+1, j'''+1}, v_{i'-1, j-1}, v_{i', j}, v_{i'+1, j+1}\}$, for some $j'' \ne j$ and $j''' \ne j$.
Due to $k = 2$, this implies that $a_{i-1} \ne b_{j-1} = a_{i'-1}$ and $a_{i+2} \ne b_{j+2} = a_{i'+2}$.
Therefore, there is no vertex of $V$ sharing exactly the letter $a_{i'}$ ($a_{i'+1}, b_{j}, b_{j+1}$, respectively) with the vertex $v_{i', j}$,
neither a vertex of $V$ sharing exactly the two letters $a_{i'}$ and $a_{i'+1}$ with the vertex $v_{i', j}$.
That is, the vertex $v_{i', j}$ is adjacent to only $v_{i, j}$ in the graph $G$. 
\end{proof}

We say the two vertices $v_{i, j}$ and $v_{i+1, j+1}$ of $V$ are {\em consecutive};
and we say the two squares $S(i,i';j,j')$ and $S(i+1,i'+1;j+1,j'+1)$ in $G$ are {\em consecutive}.
Clearly, two consecutive squares contain four pairs of consecutive vertices.
The following Lemma~\ref{lemma27} summarizes the fact that when two consecutive vertices belong to two different squares,
then these two squares are also consecutive (and thus contain the other three pairs of consecutive vertices).

\begin{lemma}
\label{lemma27}
In the graph $G$,
if there are two consecutive vertices $v_{i, j}$ and $v_{i+1, j+1}$ belonging to
two different squares $S(i_1,i'_1; j_1,j'_1)$ and $S(i_2,i'_2; j_2,j'_2)$ respectively,
then $i_2 = i_1+1, i'_2 = i'_1+1, j_2 = j_1+1, j'_2 = j'_1+1$, i.e., these two squares are consecutive.
\end{lemma}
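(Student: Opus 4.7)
The plan is to exploit the bound $k=2$ together with the fact that the two squares share the letter content of the consecutive vertices. First I would eliminate the notational ambiguity: since $v_{i,j}$ is one of the four member vertices of $S(i_1,i'_1;j_1,j'_1)$, by swapping the roles of $i_1\leftrightarrow i'_1$ and $j_1\leftrightarrow j'_1$ if necessary, I may assume $i=i_1$ and $j=j_1$. Likewise for the second square I may assume $i+1=i_2$ and $j+1=j_2$. With this convention it only remains to prove $i'_2=i'_1+1$ and $j'_2=j'_1+1$.

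Next I would read off the letter contents. By the definition of a square, all four vertices of $S(i_1,i'_1;j_1,j'_1)$ have a common letter content, say $\alpha\beta$; this gives
\[
a_{i_1}=a_{i'_1}=b_{j_1}=b_{j'_1}=\alpha,\qquad a_{i_1+1}=a_{i'_1+1}=b_{j_1+1}=b_{j'_1+1}=\beta.
\]
Similarly, the square $S(i_2,i'_2;j_2,j'_2)=S(i_1+1,i'_2;j_1+1,j'_2)$ has a common letter content that begins with the first letter of the duo at $(i_1+1,j_1+1)$, which is exactly $\beta$. Writing this content as $\beta\gamma$, I get
\[
a_{i_1+1}=a_{i'_2}=b_{j_1+1}=b_{j'_2}=\beta,\qquad a_{i_1+2}=a_{i'_2+1}=b_{j_1+2}=b_{j'_2+1}=\gamma.
\]

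Now I would invoke $k=2$. In the string $A$ the letter $\beta$ appears at positions $i_1+1$, $i'_1+1$, and $i'_2$ (the position $i_2=i_1+1$ is already counted). Since $k=2$, at most two distinct positions in $A$ may host $\beta$; as $i'_1+1\ne i_1+1$, this forces $i'_2\in\{i_1+1,i'_1+1\}$. The square structure requires $i'_2\ne i_2=i_1+1$, so $i'_2=i'_1+1$. The identical argument applied to the occurrences of $\beta$ in $B$ yields $j'_2=j'_1+1$. Hence the two squares are consecutive.

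The bookkeeping step (choosing a consistent WLOG labeling of the four members inside each square) is the only place where one has to be careful; the combinatorial core is merely the cardinality bound from $k=2$, so no step should present a real obstacle beyond that initial normalization.
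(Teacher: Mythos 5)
Your proof is correct. The paper disposes of this lemma in one sentence, by appealing to the structural fact that no two distinct squares share a member vertex (so the square containing a given vertex is unique, and one then identifies $S(i_2,i'_2;j_2,j'_2)$ with the ``successor'' of $S(i_1,i'_1;j_1,j'_1)$). You instead argue directly from the $k=2$ occurrence bound: after the WLOG normalization $(i,j)=(i_1,j_1)$, $(i+1,j+1)=(i_2,j_2)$, the letter $\beta=a_{i_1+1}$ already occupies the two permitted positions $i_1+1$ and $i'_1+1$ in $A$, so $i'_2$, which must also host $\beta$ and differ from $i_2=i_1+1$, is forced to equal $i'_1+1$; symmetrically for $B$. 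The two arguments rest on the same combinatorial core --- $k=2$ pins down where a letter can occur --- but yours is self-contained and actually supplies the details the paper leaves implicit (in particular, why the successor square of $S(i_1,i'_1;j_1,j'_1)$ must have its other two indices at $i'_1+1$ and $j'_1+1$ rather than merely existing somewhere). The explicit WLOG relabeling at the start is the right way to reconcile the symmetric square notation with the stated index equalities. No gaps.
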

\begin{proof}
This is a direct result of the fact that no two distinct squares have any member vertex in common.
\end{proof}

A series of $p$ consecutive squares $\{S(i+q, i'+q; j+q, j'+q), q = 0, 1, \ldots, p-1\}$ in the graph $G$, where $p \ge 1$,
is {\em maximal} if none of the square $S(i-1, i'-1; j-1, j'-1)$ and the square $S(i+p, i'+p; j+p, j'+p)$ exists in the graph $G$.
Note that the non-existence of the square $S(i-1, i'-1; j-1, j'-1)$ in $G$ does not rule out the existence of some of the four vertices
$v_{i-1, j-1}, v_{i'-1, j'-1}, v_{i-1, j'-1}, v_{i'-1, j-1}$ in $V$;
in fact by Lemma~\ref{lemma21} there can be as many as two of these four vertices existing in $V$
(however, more than two would imply the existence of the square).
Similarly, there can be as many as two of the four vertices
$v_{i+p, j+p}, v_{i'+p, j'+p}, v_{i+p, j'+p}, v_{i'+p, j+p}$ existing in $V$.
In the sequel, a maximal series of $p$ consecutive squares starting with $S(i, i'; j, j')$ is denoted as ${\cal S}^p(i, i'; j, j')$, where $p \ge 1$.
See for an example in \cref{fig24b} where there is a maximal series of $2$ consecutive squares ${\cal S}^2(2, 8; 2, 8)$,
where the instance of the $2$-{\sc Max-Duo} is expanded slightly from the instance shown in \cref{fig21}.

\begin{figure}[tb]
\centering
\begin{subfigure}{0.50\textwidth}
\includegraphics[width=\linewidth]{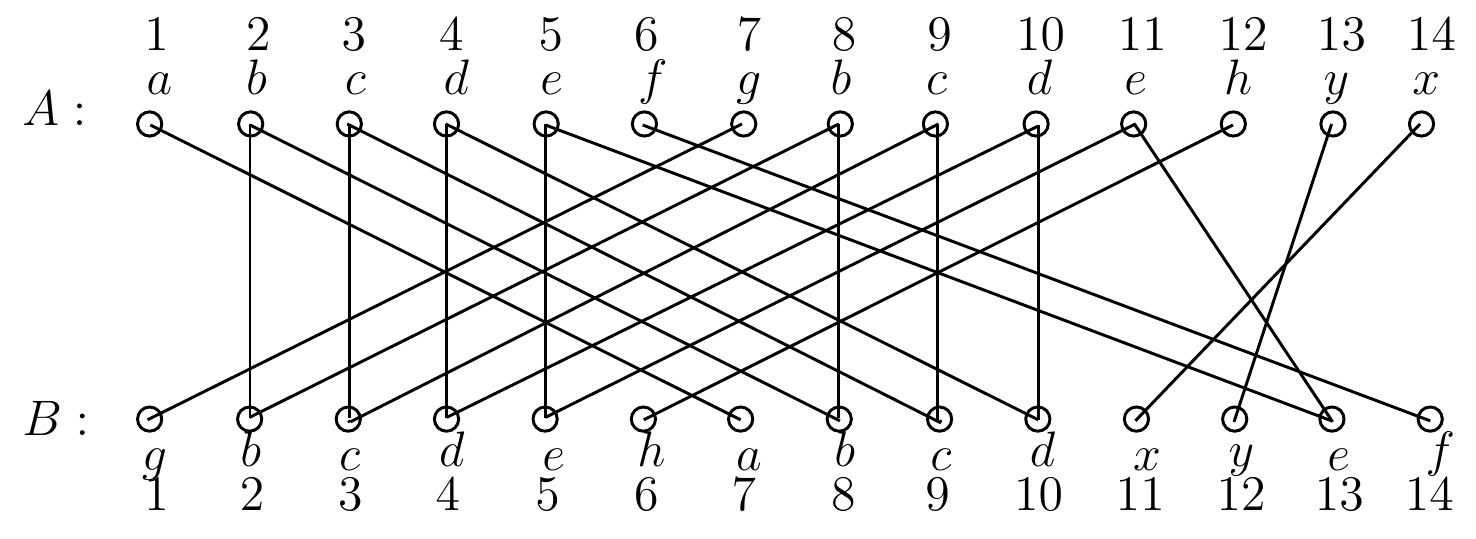}
\caption{The bipartite graph $H = (A, B, F)$.\label{fig24a}}
\end{subfigure}
\hspace*{\fill}
\begin{subfigure}{0.48\textwidth}
\includegraphics[width=\linewidth]{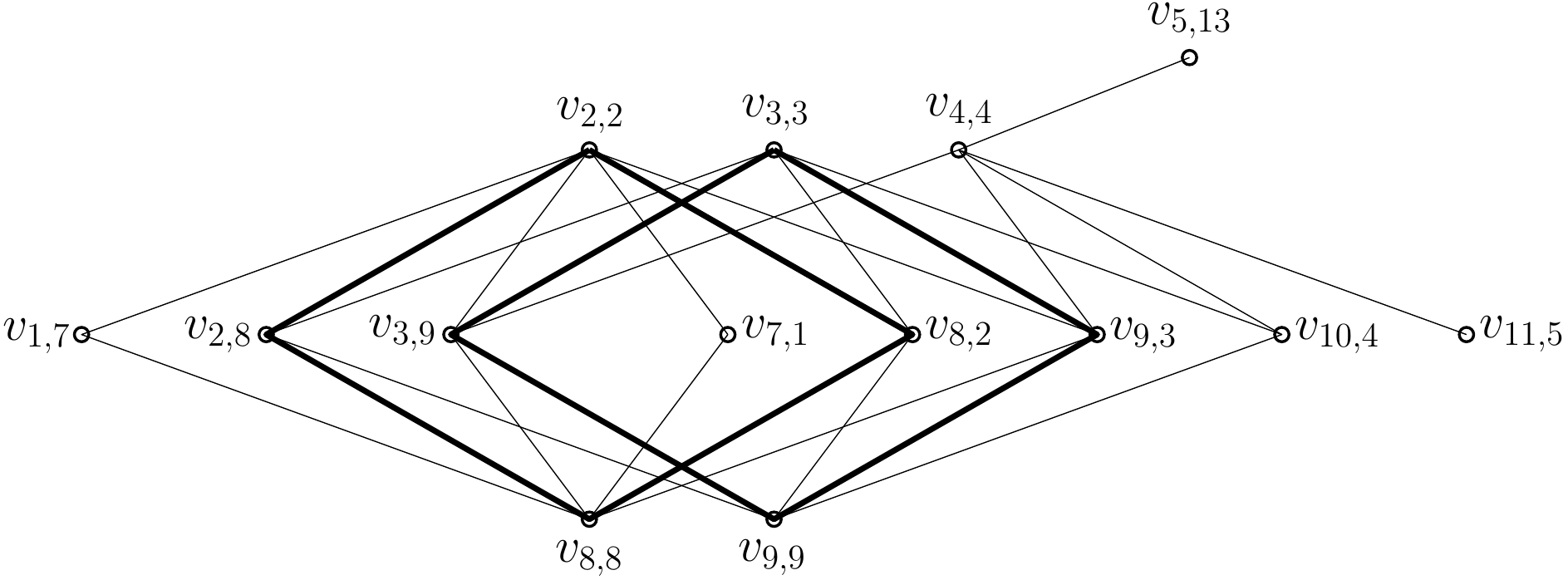}
\caption{The instance graph $G = (V, E)$.\label{fig24b}}
\end{subfigure}
	
\vskip 0.2in
\begin{subfigure}{0.42\textwidth}
\includegraphics[width=\linewidth]{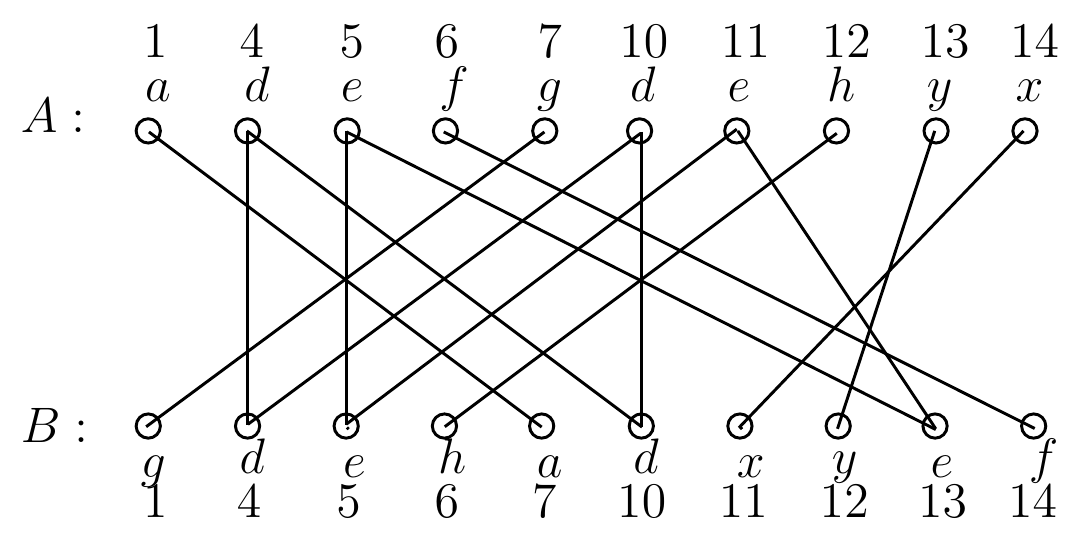}
\caption{The bipartite graph $H' = (A', B', F')$ after removal of ${\cal S}^2(2, 8; 2, 8)$.\label{fig24c}}
\end{subfigure}
\hspace*{\fill}
\begin{subfigure}{0.48\textwidth}
\includegraphics[width=\linewidth]{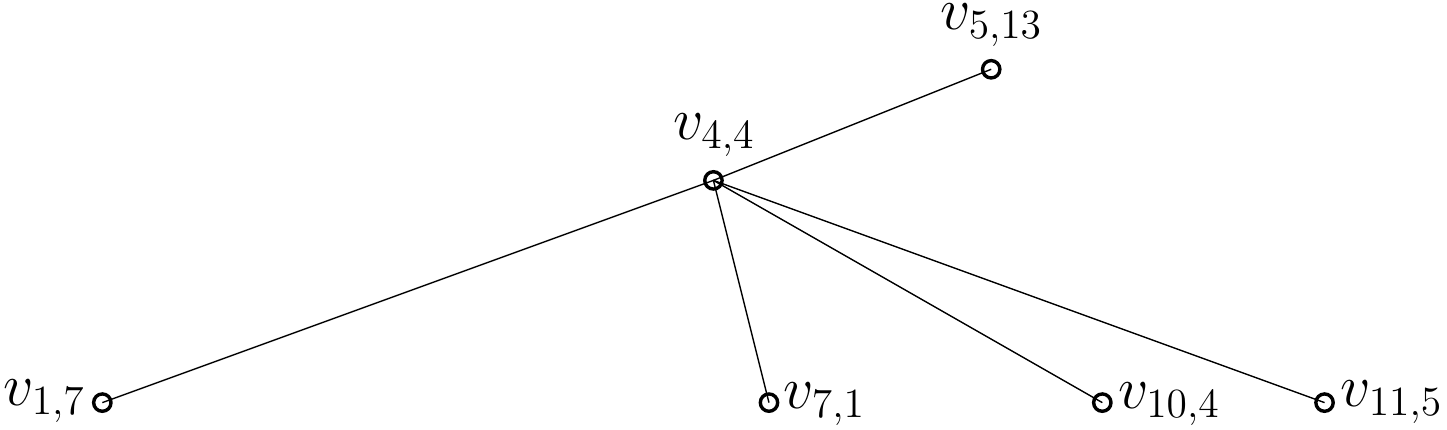}
\caption{The updated instance graph $G' = (V', E')$ after removal of ${\cal S}^2(2, 8; 2, 8)$.\label{fig24d}}
\end{subfigure}
\caption{An instance of the $2$-{\sc Max-Duo} problem with $A = (a,b,c,d,e,f,g,b,c,d,e,h,y,x)$ and $B = (g,b,c,d,e,h,a,b,c,d,x,y,e,f)$.
	The bipartite graph $H = (A, B, F)$ is shown in \cref{fig24a} and the instance graph $G = (V, E)$ of the {\sc MIS} problem is shown in \cref{fig24b}.
	There is a maximal series of $2$ squares ${\cal S}^2(2, 8; 2, 8)$ in the graph $G$, with the four substrings ``$bcd$''.
	The bipartite graph $H' = (A', B', F')$ is shown in \cref{fig24c} and the graph $G' = (V', E')$ is shown in \cref{fig24d},
	on $A' = (a,d,e,f,g,d,e,h,y,x)$ and $B' = (g,d,e,h,a,d,x,y,e,f)$.
	Applying the vertex contracting process on $G$ also gives the graph $G'$.\label{fig24}}
\end{figure}

\begin{lemma}
\label{lemma28}
Suppose ${\cal S}^p(i, i'; j, j')$, where $p \ge 1$, exists in the graph $G$.
Then,
\begin{enumerate}
\item
	the two substrings $(a_i, a_{i+1}, \ldots, a_{i+p})$ and $(a_{i'}, a_{i'+1}, \ldots, a_{i'+p})$ of the string $A$ and
	the two substrings $(b_j, b_{j+1}, \ldots, b_{j+p})$ and $(b_{j'}, b_{j'+1}, \ldots, b_{j'+p})$ of the string $B$ are identical and do not overlap;
\item
	if a maximum independent set of $G$ contains less than $2p$ vertices from ${\cal S}^p(i, i'; j, j')$,
	then it must contain either the four vertices $v_{i-1, j-1}, v_{i'-1, j'-1}, v_{i'+p, j+p}, v_{i+p, j'+p}$
	or the four vertices $v_{i'-1, j-1}, v_{i-1, j'-1}, v_{i+p, j+p}, v_{i'+p, j'+p}$.
\end{enumerate}
\end{lemma}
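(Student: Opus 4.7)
For Part 1, I would iterate Lemma~\ref{lemma21}(2) across the $p$ consecutive squares of ${\cal S}^p(i, i'; j, j')$: the four corner vertices of each $S(i+q, i'+q; j+q, j'+q)$ induce a connected subgraph of $H$, share a common letter content, and therefore give $a_{i+q} = a_{i'+q} = b_{j+q} = b_{j'+q}$ and $a_{i+q+1} = a_{i'+q+1} = b_{j+q+1} = b_{j'+q+1}$. Chaining over $q = 0, 1, \ldots, p-1$ immediately identifies the four substrings. For non-overlap, suppose without loss of generality that $d := i' - i \in \{1, \ldots, p\}$; then applying $a_{i+q} = a_{i'+q}$ at $q = 0$ and at $q = d$ places the letter $a_i$ at the three distinct positions $i$, $i+d$, and $i+2d$ of $A$ (all valid since $i + 2d = i' + d \le i' + p \le n$), contradicting $k = 2$; the same argument applies to $B$.

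For Part 2, I abbreviate the four corners of the $q$-th square by $\alpha_q := v_{i+q, j+q}$, $\beta_q := v_{i+q, j'+q}$, $\gamma_q := v_{i'+q, j+q}$, $\delta_q := v_{i'+q, j'+q}$, and split the eight boundary vertices into $X_L := \{v_{i-1, j-1}, v_{i'-1, j'-1}\}$, $X_R := \{v_{i'+p, j+p}, v_{i+p, j'+p}\}$, $Y_L := \{v_{i'-1, j-1}, v_{i-1, j'-1}\}$, $Y_R := \{v_{i+p, j+p}, v_{i'+p, j'+p}\}$, so that $X = X_L \cup X_R$ and $Y = Y_L \cup Y_R$. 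Using the neighborhood formula of Lemma~\ref{lemma21}(3), together with Lemma~\ref{lemma24} and the non-overlap proved in Part 1, a case check shows that the set of external (i.e., $V \setminus {\cal S}^p$) neighbors of the family $\{\alpha_q, \delta_q\}_{q=0}^{p-1}$ is exactly $Y_L \cup X_R$, and symmetrically those of $\{\beta_q, \gamma_q\}_{q=0}^{p-1}$ are exactly $X_L \cup Y_R$. Hence $I_A := (I \setminus ({\cal S}^p \cup Y_L \cup X_R)) \cup \{\alpha_q, \delta_q\}_{q=0}^{p-1}$ is independent, and writing $t := 2p - |I \cap {\cal S}^p| \ge 1$, the maximality of $I$ forces $|I \cap (Y_L \cup X_R)| \ge t$, and symmetrically $|I \cap (X_L \cup Y_R)| \ge t$.

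On the structural side, maximality of ${\cal S}^p$ means that neither $S(i-1, i'-1; j-1, j'-1)$ nor $S(i+p, i'+p; j+p, j'+p)$ exists, so on each side the up-to-four potential corners induce a proper subgraph of the complete bipartite graph $K_{2, 2}$; consequently any independent set of $G$ meets $X_L \cup Y_L$ (resp.\ $X_R \cup Y_R$) in at most $2$ vertices, and a size-$2$ intersection must lie entirely inside a single side, since the ``mixed'' $(1, 1)$ split across opposite sides is forbidden by the cross-edges of $K_{2, 2}$. Setting $a := |I \cap X_L|$, $b := |I \cap Y_L|$, $c := |I \cap X_R|$, $d := |I \cap Y_R|$, the four inequalities $b + c \ge t$, $a + d \ge t$, $a + b \le 2$, $c + d \le 2$ sum to $t \le 2$. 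The case $t = 1$ is impossible, because the unique half-cell contains exactly one of $\alpha_r, \delta_r$ (or of $\beta_r, \gamma_r$), but by Lemma~\ref{lemma24} these two share their neighborhood, so the vertex outside $I$ is blocked by nothing beyond what the one inside already blocks, hence is addable and contradicts maximality. For $t = 2$ all four inequalities are tight, forcing $a = c$ and $b = d = 2 - a$ with $a \in \{0, 2\}$; the two surviving configurations are $(a, b, c, d) = (2, 0, 2, 0)$, giving $X \subseteq I$, and $(0, 2, 0, 2)$, giving $Y \subseteq I$.

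I expect the main obstacle to be the external-neighborhood identification in the second paragraph: Lemma~\ref{lemma21}(3) lists up to six neighbors per vertex, and one must use both the non-overlap of Part 1 and the $k = 2$ restriction to confirm that the only neighbors of $\{\alpha_q, \delta_q\}_{q=0}^{p-1}$ (resp.\ $\{\beta_q, \gamma_q\}_{q=0}^{p-1}$) lying outside ${\cal S}^p$ are precisely $Y_L \cup X_R$ (resp.\ $X_L \cup Y_R$), with no spurious neighbor arising from a coincidental letter repetition elsewhere in the strings; once this is in hand, the rest of Part 2 reduces to a short linear-arithmetic argument on the four nonnegative integers $a, b, c, d$.
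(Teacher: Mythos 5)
Your proof is correct. Part~1 coincides with the paper's argument (the identical substrings are read off the square definitions, and non-overlap follows from $k=2$), except that you make the ``three occurrences of one letter'' contradiction explicit where the paper merely asserts it. Part~2, however, takes a genuinely different route. The paper argues by propagation through the $p$ squares: if a square contributes no vertex to the maximum independent set, a local swap forces exactly two of the four left-boundary candidates to exist and to lie in the set; if it contributes a vertex, Lemma~\ref{lemma24} and maximality force the full diagonal, and the argument passes to the next square; finally the ``compatible'' boundary combinations are excluded because they would permit inserting a full $2p$-vertex diagonal. You replace this induction by two global exchanges (swap in the whole $\{\alpha_q,\delta_q\}$ diagonal after deleting its external neighborhood $Y_L\cup X_R$, and symmetrically), which yield the lower bounds $b+c\ge t$ and $a+d\ge t$; combined with the upper bounds $a+b\le 2$ and $c+d\le 2$ coming from the non-existence of the two boundary squares, and with the parity observation that kills $t=1$, a four-line count finishes the proof. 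The technical burden is essentially the same in both versions---one must know that the only vertices outside ${\cal S}^p(i,i';j,j')$ that can conflict with a diagonal are among the eight boundary candidates, a fact the paper uses implicitly in every one of its swaps---but you correctly isolate it as the single computation to be checked (via Lemma~\ref{lemma21}(3), Lemma~\ref{lemma24} and the non-overlap of Part~1), and once it is in hand your counting argument is shorter and arguably more transparent than the square-by-square propagation.
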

\begin{proof}
By the definition of the square $S(i+q, i'+q; j+q, j'+q)$, we have $a_{i+q} = a_{i'+q}$ and $a_{i+q+1} = a_{i'+q+1}$;
we thus conclude that the two substrings $(a_i, a_{i+1}, \ldots, a_{i+p})$ and $(a_{i'}, a_{i'+1}, \ldots, a_{i'+p})$ are identical.
In \cref{fig24b}, for ${\cal S}^2(2, 8; 2, 8)$ the two substrings are ``$bcd$''.
If these two substrings overlapped, then there would be three occurrences of at least one letter, contradicting the fact that $k = 2$.
This proves the first item.

Note that the square $S(i-1, i'-1; j-1, j'-1)$ does not exist in the graph $G$, and thus at most two of its four vertices
(which are $v_{i-1, j-1}, v_{i'-1, j-1}, v_{i-1, j'-1}$ and $v_{i'-1, j'-1}$) exist in $V$.
We claim that if no vertex of the square $S(i, i'; j, j')$ is in $I^*$,
then there are exactly two of the four vertices $v_{i-1, j-1}, v_{i'-1, j-1}, v_{i-1, j'-1}$ and $v_{i'-1, j'-1}$ exist in $V$ and they both are in $I^*$.
Suppose otherwise there is at most one of the four vertices in $I^*$, say $v_{i-1, j-1}$;
we may increase the size of $I^*$ by removing $v_{i-1, j-1}$ while adding
either the two vertices $v_{i, j}$ and $v_{i', j'}$ or the two vertices $v_{i', j}$ and $v_{i, j'}$
(depending on which vertices of the square $S(i+1, i'+1; j+1, j'+1)$ are in $I^*$),
a contradiction.

Assume next that a vertex of the square $S(i, i'; j, j')$ is in $I^*$, say $v_{i, j}$;
then due to maximality of $I^*$ and Lemma~\ref{lemma24} both $v_{i, j}$ and $v_{i', j'}$ are in $I^*$.
We claim and prove similarly as in the last paragraph that if no vertex of the square $S(i+1, i'+1; j+1, j'+1)$ is in $I^*$,
then there are exactly two of the four vertices $v_{i-1, j-1}, v_{i'-1, j-1}, v_{i-1, j'-1}$ and $v_{i'-1, j'-1}$ exist in $V$ and they both are in $I^*$.
If there is a vertex of the square $S(i+1, i'+1; j+1, j'+1)$ in $I^*$,
then it must be one of $v_{i+1, j+1}$ and $v_{i'+1, j'+1}$;
and due to maximality and Lemma~\ref{lemma24} both $v_{i+1, j+1}$ and $v_{i'+1, j'+1}$ are in $I^*$.
And so on;
repeatedly applying this argument, we claim and prove similarly that if no vertex of the square $S(i+p-1, i'+p-1; j+p-1, j'+p-1)$ is in $I^*$,
then there are exactly two of the four vertices $v_{i-1, j-1}, v_{i'-1, j-1}, v_{i-1, j'-1}$ and $v_{i'-1, j'-1}$ exist in $V$ and they both are in $I^*$.
If there is a vertex of the square $S(i+p-1, i'+p-1; j+p-1, j'+p-1)$ in $I^*$,
then it must be one of $v_{i+p-1, j+p-1}$ and $v_{i'+p-1, j'+p-1}$;
and due to maximality and Lemma~\ref{lemma24} both $v_{i+p-1, j+p-1}$ and $v_{i'+p-1, j'+p-1}$ are in $I^*$.

To summarize, we proved in the above two paragraphs that if $I^*$ contains less than $2p$ vertices from ${\cal S}^p(i, i'; j, j')$,
then there are exactly two of the four vertices $v_{i-1, j-1}, v_{i'-1, j-1}, v_{i-1, j'-1}$ and $v_{i'-1, j'-1}$ exist in $V$ and they both are in $I^*$;
and these two vertices are either $v_{i-1, j-1}$ and $v_{i'-1, j'-1}$ or $v_{i'-1, j-1}$ and $v_{i-1, j'-1}$.
Symmetrically, there are exactly two of the four vertices $v_{i+p, j+p}, v_{i'+p, j+p}, v_{i+p, j'+p}$ and $v_{i'+p, j'+p}$ exist in $V$ and they both are in $I^*$;
and these two vertices are either $v_{i+p, j+p}$ and $v_{i'+p, j'+p}$ or $v_{i'+p, j+p}$ and $v_{i+p, j'+p}$.
Clearly from the above, when the combination is $v_{i-1, j-1}$ and $v_{i'-1, j'-1}$ versus $v_{i+p, j+p}$ and $v_{i'+p, j'+p}$,
we may increase the size of $I^*$ to contain exactly $2p$ vertices from ${\cal S}^p(i, i'; j, j')$ without affecting any vertex outside of ${\cal S}^p(i, i'; j, j')$,
a contradiction.
Therefore, the only possible combinations are $v_{i-1, j-1}$ and $v_{i'-1, j'-1}$ versus $v_{i'+p, j+p}$ and $v_{i+p, j'+p}$,
and $v_{i'-1, j-1}$ and $v_{i-1, j'-1}$ versus $v_{i+p, j+p}$ and $v_{i'+p, j'+p}$.
This proves the second item of the lemma.
\end{proof}

Suppose ${\cal S}^p(i, i'; j, j')$, where $p \ge 1$, exists in the graph $G$.
Let $A'$ denote the string obtained from $A$ by removing the two substrings $(a_i, a_{i+1}, \ldots, a_{i+p-1})$ and $(a_{i'}, a_{i'+1}, \ldots, a_{i'+p-1})$
and concatenating the remainder together,
and $B'$ denote the string obtained from $B$ by removing the two substrings $(b_j, b_{j+1}, \ldots, b_{j+p-1})$ and $(b_{j'}, b_{j'+1}, \ldots, b_{j'+p-1})$ 
and concatenating the remainder.
Let the graph $G' = (V', E')$ denote the instance graph of the {\sc MIS} problem constructed from the two strings $A'$ and $B'$.
See for an example $G'$ in \cref{fig24d}, where there is a maximal series of $2$ consecutive squares ${\cal S}^2(2, 8; 2, 8)$ in the graph $G$.

\begin{corollary}
\label{coro29}
Suppose ${\cal S}^p(i, i'; j, j')$, where $p \ge 1$, exists in the graph $G$.
Then, the union of a maximum independent set in the graph $G' = (V', E')$ and certain $2p$ vertices from ${\cal S}^p(i, i'; j, j')$
becomes a maximum independent set in the graph $G = (V, E)$,
where these certain $2p$ vertices are $v_{i, j}, v_{i+1, j+1}, \ldots, v_{i+p-1, j+p-1}$ and $v_{i', j'}, v_{i'+1, j'+1}, \ldots, v_{i'+p-1, j'+p-1}$
if $v_{i-1, j-1}$ or $v_{i+p, j+p}$ is in the maximum independent set in $G'$,
or they are $v_{i', j}, v_{i'+1, j+1}, \ldots$, $v_{i'+p-1, j+p-1}$ and $v_{i, j'}, v_{i+1, j'+1}, \ldots, v_{i+p-1, j'+p-1}$
if $v_{i'-1, j-1}$ or $v_{i'+p, j+p}$ is in the maximum independent set in $G'$.
\end{corollary}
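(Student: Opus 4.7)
The plan is to prove the corollary by establishing both directions of an equality: for any maximum independent set $I'$ of $G'$ and any maximum independent set $I^*$ of $G$, we will show $|I^*| \ge |I'|+2p$ (via an explicit construction) and $|I^*| \le |I'|+2p$ (via \cref{lemma28}), while simultaneously exhibiting the diagonal $S$ of size $2p$ prescribed in the statement.

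For the $\ge$ direction, I would take a maximum independent set $I'$ of $G'$ and verify that $I'\cup S$ is independent in $G$, where $S$ is the prescribed diagonal of $2p$ vertices inside ${\cal S}^p(i,i';j,j')$. Internal independence of $S$ follows from \cref{lemma21,lemma24}: within each square the two chosen diagonal vertices are non-adjacent, and across consecutive squares the pairs $v_{i+q,j+q}$ and $v_{i+q+1,j+q+1}$ (resp.\ $v_{i'+q,j'+q}$ and $v_{i'+q+1,j'+q+1}$) are merely parallel in $H$ rather than conflicting. Between $S$ and $I'$, the only vertices outside ${\cal S}^p(i,i';j,j')$ that can conflict with $S$ are the eight boundary vertices $v_{i-1,j-1}, v_{i'-1,j-1}, v_{i-1,j'-1}, v_{i'-1,j'-1}$ and their right-end counterparts $v_{i+p,j+p}, v_{i'+p,j+p}, v_{i+p,j'+p}, v_{i'+p,j'+p}$, by \cref{lemma21}(3); the case split in the statement is engineered precisely so the chosen diagonal of $S$ avoids whichever of these boundary vertices lies in $I'$, and this labelling is unambiguous since at each end the surviving pair is pairwise conflicting except on one diagonal.

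For the $\le$ direction I would apply \cref{lemma28}(2) to a maximum independent set $I^*$ of $G$. Either $I^*$ already contains exactly $2p$ vertices of ${\cal S}^p(i,i';j,j')$, in which case these must constitute one of the two diagonals by \cref{lemma24} applied square by square, or $I^*$ contains one of the two four-vertex boundary patterns specified in \cref{lemma28}(2). In the second case a bounded local modification (the same swap that drives the proof of \cref{lemma28}) replaces the four-vertex pattern with $2p$ diagonal vertices of the series without decreasing $|I^*|$, reducing to the first case. In either case the residual set $I^*\setminus V({\cal S}^p(i,i';j,j'))$ projects onto an independent set of $G'$, which yields $|I^*|-2p \le |I'|$ and matches the lower bound.

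The main obstacle will be making this projection precise. Removing the length-$p$ substrings $(a_i,\ldots,a_{i+p-1})$ and $(a_{i'},\ldots,a_{i'+p-1})$ from $A$ (and the parallel deletions in $B$) creates junction duos $(a_{i-1}, a_{i+p})$ and $(b_{j-1}, b_{j+p})$ in $A'$ and $B'$ (and the symmetric pair at the $i'$, $j'$ sites) that can produce vertices of $V'$ with no literal twin in $V$. One must verify that the adjacency between these junction vertices of $G'$ and the rest of $V'$ faithfully mirrors the adjacency between the eight boundary vertices listed above and $V\setminus V({\cal S}^p(i,i';j,j'))$ in $G$, and that the two possible diagonal choices for $S$ correspond precisely to the two coherent resolutions of this identification. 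Once this bookkeeping is in place, the two-sided bound pins down $|I^*|=|I'|+2p$ and certifies the explicit construction in the statement.
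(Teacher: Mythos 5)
Your proposal is correct and follows essentially the same route as the paper: both hinge on viewing $G'$ as a vertex contraction of $G$ (your ``junction'' bookkeeping is exactly the paper's observation that $v_{i-1,j-1}$ becomes adjacent to $v_{i'+p,j+p}$, and $v_{i'-1,j-1}$ to $v_{i+p,j+p}$, in $G'$), on adding the diagonal of $2p$ vertices compatible with whichever boundary vertices survive in the maximum independent set of $G'$, and on \cref{lemma28} for maximality. The paper compresses your upper-bound direction into ``the maximality can be proved by a simple contradiction,'' so your plan is, if anything, more explicit about the same argument.
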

\begin{proof}
Consider the construction of the graph $G' = (V', E')$ from the two strings $A'$ and $B'$.
Equivalently, starting with the graph $G = (V, E)$,
if we contract the $p$ vertices $v_{i, j}, v_{i+1, j+1}, \ldots, v_{i+p-1, j+p-1}$ into the vertex $v_{i+p, j+p}$ if it exists or otherwise into a void vertex,
contract the $p$ vertices $v_{i', j'}, v_{i'+1, j'+1}, \ldots, v_{i'+p-1, j'+p-1}$ into the vertex $v_{i'+p, j'+p}$ if it exists or otherwise into a void vertex,
contract the $p$ vertices $v_{i', j}, v_{i'+1, j+1}, \ldots, v_{i'+p-1, j+p-1}$ into the vertex $v_{i'+p, j+p}$ if it exists or otherwise into a void vertex,
and contract the $p$ vertices $v_{i, j'}, v_{i+1, j'+1}, \ldots, v_{i+p-1, j'+p-1}$ into the vertex $v_{i+p, j'+p}$ if it exists or otherwise into a void vertex,
then we obtain a graph that is exactly $G'$.
In the graph $G'$, the vertices $v_{i-1, j-1}$ and $v_{i'+p, j+p}$, if both exist in $V$, become adjacent to each other;
so are the vertices $v_{i'-1, j-1}$ and $v_{i+p, j+p}$, if both exist in $V$.
It follows that the maximum independent set in the graph $G' = (V', E')$ does not contain both vertices $v_{i-1, j-1}$ and $v_{i'+p, j+p}$,
or both vertices $v_{i'-1, j-1}$ and $v_{i+p, j+p}$.
Therefore, starting with the maximum independent set in the graph $G' = (V', E')$,
we can add exactly $2p$ vertices from ${\cal S}^p(i, i'; j, j')$ to form an independent set in $G$, of which the maximality can be proved by a simple contradiction.

We remark that in the extreme case where none of the vertices of $S(i-1, i'-1; j-1, j'-1)$ and none of the vertices of $S(i+p, i'+p; j+p, j'+p)$
are in the maximum independent set in $G'$,
we may add either of the two sets of $2p$ vertices from ${\cal S}^p(i, i'; j, j')$ to form a maximum independent set in $G$.
\end{proof}

Iteratively applying the above string shrinkage process, or equivalently the vertex contracting process,
associated with the elimination of a maximal series of consecutive squares.
In $O(n)$ iterations, we achieve the final graph containing no squares, which we denote as $G_1 = (V_1, E_1)$.

\section{An approximation algorithm for $2$-{\sc Max-Duo}}
\label{sec3}
A high-level description of the approximation algorithm, denoted as {\sc Approx}, for the $2$-{\sc Max-Duo} problem is depicted in \cref{approx}.

\vskip -15pt
\begin{figure}[H]
\centering
\begin{algorithm}[H]
\footnotesize
\caption*{{\bf Algorithm \sc Approx}}
\begin{algorithmic}[1]
\State Construct the graph $G = (V, E)$ from two input strings $A$ and $B$;
\While{(there is a square in the graph)}
	\State find a maximal series of squares;
	\State locate the four identical substrings of $A$ and $B$ as in Lemma~\ref{lemma28};
	\State remove the corresponding substrings and accordingly update the graph;
\EndWhile
\State denote the resultant graph as $G_1 = (V_1, E_1)$;
\State set $L_1$ to contain all degree-$0$ and degree-$1$ vertices of $G_1$;
\State set $N[L_1]$ to be the closed neighborhood of $L_1$ in $G_1$, {\it i.e.} $N[L_1] = L_1 \cup N(L_1)$;
\State set $G_2 = G_1[V_1 - N[L_1]]$, the subgraph of $G_1$ induced on $V_1 - N[L_1]$;
\State compute an independent set $I_2$ in $G_2$ by the $((\Delta+3)/5 + \epsilon)$-approximation in \cite{BF99};
\State set $I_1 = I_2 \cup L_1$, an independent set in $G_1$; 
\State \Return an independent set $I$ in $G$ using $I_1$ and Corollary \ref{coro29}.
\end{algorithmic}
\end{algorithm}
\caption{A high-level description of the approximation algorithm for $2$-{\sc Max-Duo}.\label{approx}}
\end{figure}

In more details, given an instance of the $2$-{\sc Max-Duo} problem with two length-$n$ strings $A$ and $B$,
the first step of our algorithm is to construct the graph $G = (V, E)$, which is done in $O(n^2)$ time.
In the second step (Lines 2--7 in \cref{approx}),
it iteratively applies the vertex contracting process presented in \cref{sec2} at the existence of a maximal series of consecutive squares,
and at the end it achieves the final graph $G_1 = (V_1, E_1)$ which does not contain any square.
This second step can be done in $O(n^2)$ time too since each iteration of vertex contracting process is done in $O(n)$ time and there are $O(n)$ iterations.
In the third step (Lines 8--10 in \cref{approx}), let $L_1$ denote the set of singletons (degree-$0$ vertices) and leaves (degree-$1$ vertices) in the graph $G_1$;
our algorithm removes all the vertices of $L_1$ and their neighbors from the graph $G_1$ to obtain the remainder graph $G_2 = (V_2, E_2)$.
This step can be done in $O(n^2)$ time too due to $|V_1| \le |V| \le 2n$,
and the resultant graph $G_2$ has maximum degree $\Delta \le 4$ by Corollaries \ref{coro25} and \ref{coro26}.
(See for an example illustrated in \cref{fig31a}.)
In the fourth step (Lines 11--12 in \cref{approx}),
our algorithm calls the state-of-the-art approximation algorithm for the {\sc MIS} problem \cite{BF99}
on the graph $G_2$ to obtain an independent set $I_2$ in $G_2$;
and returns $I_1 = L_1 \cup I_2$ as an independent set in the graph $G_1$.
The running time of this step is dominated by the running time of the state-of-the-art approximation algorithm for the {\sc MIS} problem,
which is a high polynomial in $n$ and $1/\epsilon$.
In the last step (Line 13 in \cref{approx}), using the independent set $I_1$ in $G_1$,
our algorithm adds $2p$ vertices from each maximal series of $p$ consecutive squares according to Corollary \ref{coro29},
to produce an independent set $I$ in the graph $G$.
(For an illustrated example see \cref{fig31b}.)
The last step can be done in $O(n)$ time.

\begin{figure}[tb]
\centering
\begin{subfigure}{0.44\textwidth}
\includegraphics[width=\linewidth]{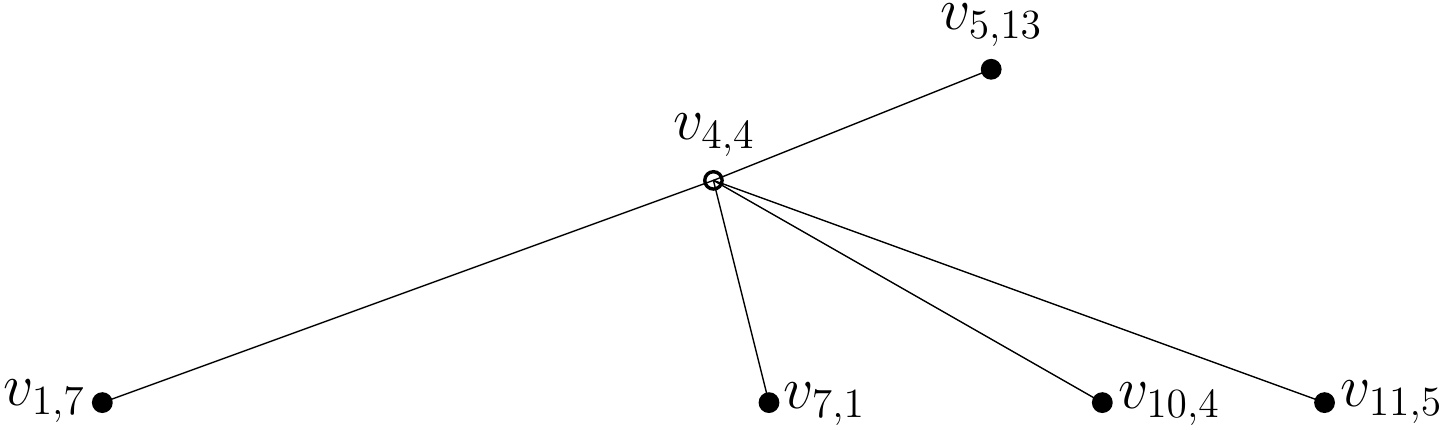}
\caption{The independent set $I_1 = \{v_{1,7}, v_{7,1}, v_{10,4}, v_{11,5}, v_{5,13}\}$ in $G_1$,
	consisting of all the five leaves of $G_1 = G'$ shown in \cref{fig24d}.\label{fig31a}}
\end{subfigure}
\hspace*{\fill}
\begin{subfigure}{0.515\textwidth}
\includegraphics[width=\linewidth]{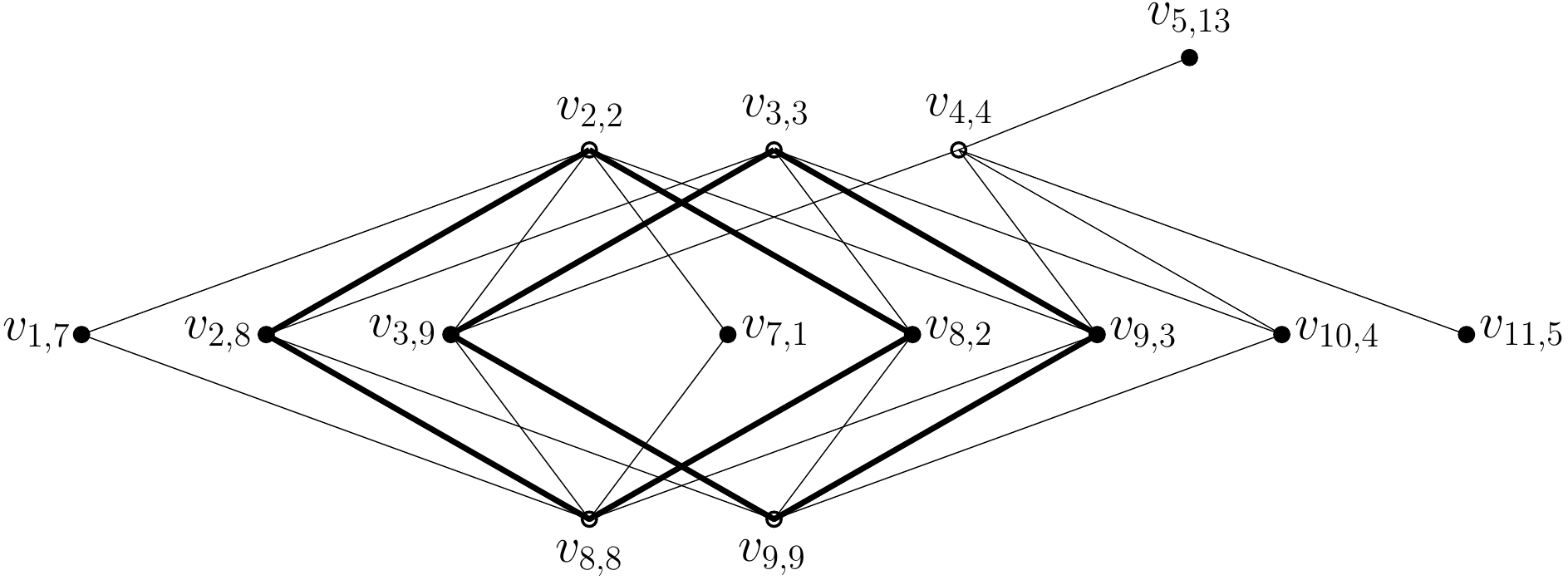}
\caption{Using $I_1$, since $v_{10, 4} \in I_1$,
	the four vertices $v_{2, 8}, v_{3, 9}$, $v_{8, 2}, v_{9, 3}$ are added
	to form an independent set $I$ in the original graph $G$ shown in \cref{fig24b}.\label{fig31b}}
\end{subfigure}

\vskip 0.2in
\begin{subfigure}{0.6\textwidth}
\includegraphics[width=\linewidth]{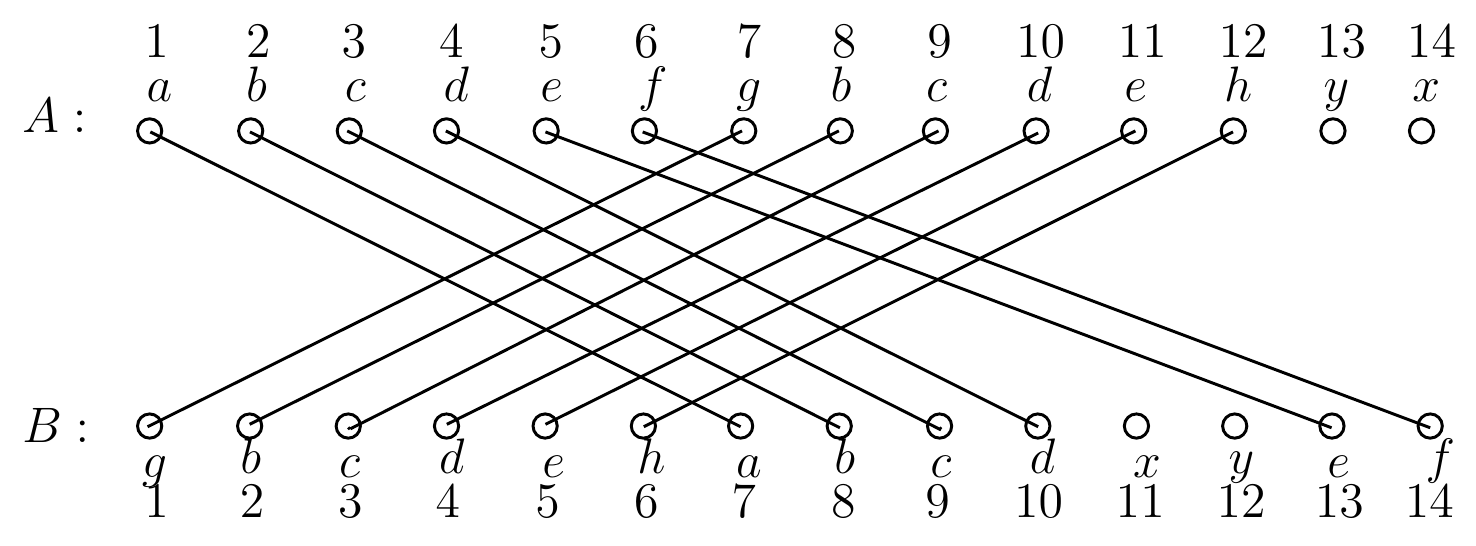}
\caption{The parallel edges of $H$ corresponding to the independent set $I$ shown in Figure \ref{fig31b},
	also correspond to the $9$ preserved duos $(a, b), (b, c), (c, d), (e, f)$, $(g, b), (b, c), (c, d), (d, e), (e, h)$
	for the instance shown in \cref{fig24a}.\label{fig31c}}
\end{subfigure}
\caption{Illustration of the execution of our algorithm {\sc Approx} on the instance shown in \cref{fig24}.
	The independent set $I_1$ in the graph $G_1$ is shown in \cref{fig31a} in filled circles,
	for which we did not apply the state-of-the-art approximation algorithm for the {\sc MIS} problem.
	The independent set $I$ in the graph $G$ is shown in \cref{fig31b} in filled circles,
	according to Corollary \ref{coro29} the four vertices $v_{2, 8}, v_{3, 9}$, $v_{8, 2}, v_{9, 3}$ are added due to $v_{10, 4} \in I_1$.
	The parallel edges of $H$ corresponding to the vertices of $I$ are shown in \cref{fig31c},
	representing a feasible solution to the $2$-{\sc Max-Duo} instance shown in \cref{fig24}.\label{fig31}}
\end{figure}

The state-of-the-art approximation algorithm for the {\sc MIS} problem on a graph with maximum degree $\Delta$
has a performance ratio of $(\Delta + 3)/5 + \epsilon$, for any $\epsilon > 0$ \cite{BF99}.

\begin{lemma}
\label{lemma31}
In the graph $G_1 = (V_1, E_1)$, let $\OPT_1$ denote the cardinality of a maximum independent set in $G_1$,
and let $\SOL_1$ denote the cardinality of the independent set $I_1$ returned by the algorithm {\sc Approx}.
Then, $\OPT_1 \le (1.4 + \epsilon) \SOL_1$, for any $\epsilon > 0$.
\end{lemma}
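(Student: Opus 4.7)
The plan has four main parts, all leveraging the structural work already done on $G_1$.

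First, I would establish the crucial structural fact that the graph $G_2 = G_1[V_1 \setminus N[L_1]]$ has maximum degree at most $4$. The vertex-contraction loop terminates only once $G_1$ contains no square, so Corollary~\ref{coro25} rules out every degree-$6$ vertex in $G_1$. Corollary~\ref{coro26} then asserts that, in the absence of squares, every degree-$5$ vertex of $G_1$ is adjacent to a degree-$1$ vertex, hence lies in $N(L_1) \subseteq N[L_1]$ and is excluded from $G_2$. Therefore $\Delta(G_2) \le 4$, and the Berman--Fujito $\big((\Delta+3)/5+\epsilon\big)$-approximation invoked in Line~11 produces $I_2$ with $|I_2| \ge \OPT_2/(7/5 + \epsilon)$, where $\OPT_2$ denotes the maximum independent set size in $G_2$.

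Second, I would show that some maximum independent set $I_1^*$ of $G_1$ can be chosen with $L_1 \subseteq I_1^*$. The argument is a standard exchange: every isolated vertex lies in any maximum IS by maximality, and for any leaf $u \notin I_1^*$, its unique neighbor $v$ must satisfy $v \in I_1^*$ (otherwise $I_1^* \cup \{u\}$ would be larger). Replacing $v$ by $u$ yields another independent set of the same size; moreover $v$ can have at most one leaf neighbor, since replacing $v$ by all of its leaf neighbors simultaneously would otherwise strictly enlarge $I_1^*$, contradicting maximality. Iterating the swap over all leaves gives the desired $I_1^*$. (A minor technical point is that two leaves connected to one another form a $K_2$ component contributing one vertex to any IS; for such components $L_1$ is understood to contain a single representative, which does not affect the counts below.)

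Third, I would combine the two. Since $L_1 \subseteq I_1^*$ and $I_1^*$ is independent, the set $I_1^* \setminus L_1$ is disjoint from $N[L_1]$ and is therefore an independent set in $G_2$, yielding $|I_1^* \setminus L_1| \le \OPT_2 \le (7/5+\epsilon)\,|I_2|$. Summing,
\[
\OPT_1 \;=\; |L_1| + |I_1^* \setminus L_1| \;\le\; |L_1| + (7/5+\epsilon)\,|I_2| \;\le\; (1.4+\epsilon)\bigl(|L_1| + |I_2|\bigr) \;=\; (1.4+\epsilon)\,\SOL_1.
\]

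The main obstacle is the Step~2 exchange argument: one must verify carefully that each swap preserves independence without disturbing previously-secured leaves, and that the one-leaf-per-neighbor property is forced by maximality. The degree-reduction claim in Step~1 is the conceptual heart of the algorithm, but it is essentially a direct consequence of Corollaries~\ref{coro25} and~\ref{coro26}; once Step~2 is in place, Step~3 is an entirely routine inequality.
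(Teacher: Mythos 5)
Your proposal is correct and follows essentially the same route as the paper: bound $\Delta(G_2)\le 4$ via Corollaries~\ref{coro25} and~\ref{coro26}, apply the Berman--Fujito bound to $G_2$, and absorb $|L_1|$ into the ratio. Your Step~2 exchange argument (and the $K_2$ caveat) simply makes explicit the claim $\OPT_1 = |L_1| + \OPT_2$ that the paper asserts without proof, so it is a more careful rendering of the same argument rather than a different one.
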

\begin{proof}
Let $L_1$ denote the set of singletons (degree-$0$ vertices) and leaves (degree-$1$ vertices) in the graph $G_1$;
our algorithm {\sc Approx} removes all the vertices of $L_1$ and their neighbors from the graph $G_1$ to obtain the remainder graph $G_2 = (V_2, E_2)$.
The graph $G_2$ has maximum degree $\Delta \le 4$ by Corollaries \ref{coro25} and \ref{coro26}.
Let $\OPT_2$ denote the cardinality of a maximum independent set in $G_2$,
and let $\SOL_2$ denote the cardinality of the independent set $I_2$ returned by the state-of-the-art approximation algorithm for the {\sc MIS} problem.
We have $\OPT_1 = |L_1| + \OPT_2$ and $\OPT_2 \le (1.4 + \epsilon) \SOL_2$, for any $\epsilon > 0$.
Therefore,
\[
\OPT_1 \le |L_1| + (1.4 + \epsilon) \SOL_2 \le (1.4 + \epsilon) (|L_1| + \SOL_2) = (1.4 + \epsilon) \SOL_1.
\]
This proves the lemma.
\end{proof}

\begin{theorem}
\label{thm32}
The $2$-{\sc Max-Duo} problem can be approximated within a ratio arbitrarily close to $1.4$, by a linear reduction to the {\sc MIS} problem.
\end{theorem}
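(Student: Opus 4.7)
The plan is to combine the structural reduction developed in Section~\ref{sec2} with the ratio guarantee already proved for the reduced graph in Lemma~\ref{lemma31}, and then lift the result back to the original 2-\textsc{Max-Duo} instance. I would begin by recording the running-time of {\sc Approx}: each of the five phases described after Figure~\ref{approx} runs in polynomial time, and the only nontrivial contribution is the invocation of the Berman--Fujito $((\Delta+3)/5+\epsilon)$-approximation on $G_2$, which is polynomial in $n$ and $1/\epsilon$. Since $G$ can be constructed in $O(n^2)$ time, the square-elimination loop terminates in $O(n)$ iterations each of cost $O(n)$, and the bookkeeping in the final line is linear, the overall algorithm runs in time polynomial in $n$ and $1/\epsilon$, and the reduction is linear in the sense required by the theorem statement.

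The heart of the argument is ratio preservation. Let $\OPT$ denote the optimum of the 2-\textsc{Max-Duo} instance, and let $\SOL$ denote the size of the solution produced by {\sc Approx}. By the reduction recalled in Section~\ref{sec2}, $\OPT$ equals the cardinality of a maximum independent set in $G=(V,E)$, and $\SOL=|I|$ where $I$ is the independent set returned in the final step. Next I would iterate Corollary~\ref{coro29} along the $O(n)$ maximal series of consecutive squares ${\cal S}^{p_1}(\cdot),\ldots,{\cal S}^{p_t}(\cdot)$ that are contracted during the while-loop: after eliminating all of them, the graph becomes $G_1=(V_1,E_1)$, and each application of the corollary contributes exactly $2p_r$ vertices to a maximum independent set of the current graph on top of a maximum independent set of the contracted graph. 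A straightforward induction on $t$ therefore gives
\begin{equation*}
\OPT \;=\; \OPT_1 + \sum_{r=1}^{t} 2p_r,
\qquad
\SOL \;=\; \SOL_1 + \sum_{r=1}^{t} 2p_r,
\end{equation*}
where $\OPT_1$ and $\SOL_1$ are the quantities analyzed in Lemma~\ref{lemma31}.

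With these identities in hand, the ratio follows by a routine inequality. Applying Lemma~\ref{lemma31} and noting that the additive square contribution $\sum_r 2p_r$ is common to both expressions, I would write
\begin{equation*}
\OPT \;=\; \OPT_1 + \sum_{r=1}^{t} 2p_r
\;\le\; (1.4+\epsilon)\SOL_1 + \sum_{r=1}^{t} 2p_r
\;\le\; (1.4+\epsilon)\Bigl(\SOL_1 + \sum_{r=1}^{t} 2p_r\Bigr)
\;=\; (1.4+\epsilon)\,\SOL,
\end{equation*}
which yields the advertised ratio for any $\epsilon>0$.

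The only subtle point, and the step I expect to take the most care, is justifying the additive decomposition of $\OPT$: one must verify that the vertices Corollary~\ref{coro29} attaches to a MIS in the contracted graph really do combine consistently across nested or adjacent contractions, i.e.\ that the ``maximal series'' identified at different iterations of the while-loop never interact in a way that breaks independence or double-counts vertices. This is a consequence of Lemma~\ref{lemma27} together with the fact that after each contraction the contracted vertices are removed from further consideration, so I would spell out the inductive step explicitly, showing that maximality in the current graph lifts to maximality in the pre-contraction graph without altering the contribution of any previously contracted series. Once this induction is in place, the chain of inequalities above closes the proof.
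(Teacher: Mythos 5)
Your proposal is correct and follows essentially the same route as the paper: the paper also proves the theorem by induction over the square-contraction iterations, using Corollary~\ref{coro29} (and Lemma~\ref{lemma28}) to obtain $\OPT = \OPT' + 2p$ and $\SOL = \SOL' + 2p$ at each step, with Lemma~\ref{lemma31} as the base case and the identical chain of inequalities to conclude $\OPT \le (1.4+\epsilon)\SOL$. Your aggregated form $\OPT = \OPT_1 + \sum_r 2p_r$ is just the unrolled version of that per-iteration induction, and your flagged ``subtle point'' about non-interacting contractions is exactly what the paper's step-by-step induction handles.
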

\begin{proof}
We prove by induction.
At the presence of maximal series of $p$ consecutive squares, we perform the vertex contracting process iteratively.
In each iteration to handle one maximal series of $p$ consecutive squares,
let $G$ and $G'$ denote the graph before and after the contracting step, respectively.
Let $\OPT'$ denote the cardinality of a maximum independent set in $G'$,
and let $\SOL'$ denote the cardinality of the independent set $I'$ returned by the algorithm {\sc Approx}.
Given any $\epsilon > 0$, from Lemma~\ref{lemma31}, we may assume that $\OPT' \le (1.4 + \epsilon) \SOL'$.

Let $\OPT$ denote the cardinality of a maximum independent set in $G$,
and let $\SOL$ denote the cardinality of the independent set returned by the algorithm {\sc Approx},
which adds $2p$ vertices from the maximal series of $p$ consecutive squares to the independent set $I'$ in $G'$, according to Corollary \ref{coro29},
to produce an independent set $I$ in the graph $G$.
Lemma~\ref{lemma28} states that
$\OPT = \OPT' + 2p$.
Therefore,
\[
\OPT = \OPT' + 2p \le (1.4 + \epsilon) \SOL' + 2p \le (1.4 + \epsilon) (\SOL' + 2p) = (1.4 + \epsilon) \SOL.
\]
This proves that for the original graph $G = (V, E)$ we also have $\OPT \le (1.4 + \epsilon) \SOL$ accordingly.
That is, the worst-case performance ratio of our algorithm {\sc Approx} is $1.4 + \epsilon$, for any $\epsilon > 0$.
The time complexity of the algorithm {\sc Approx} has been determined to be polynomial at the beginning of the section,
and it is dominated by the time complexity of the state-of-the-art approximation algorithm for the {\sc MIS} problem.
The theorem is thus proved.
\end{proof}

\section{Conclusion}
\label{sec5}
In this paper, we examined the {\sc Max-Duo} problem, the complement of the well studied {\em minimum common string partition} problem.
Based on an existing linear reduction to the {\em maximum independent set} (MIS) problem \cite{GKZ04,BKL14},
we presented a vertex-degree reduction technique for the $2$-{\sc Max-Duo} to reduce the maximum degree of the constructed instance graph to $4$.
Along the way, we uncovered many interesting structural properties of the constructed instance graph.
This degree reduction enables us to adopt the state-of-the-art approximation algorithm for the {\sc MIS} problem on low degree graphs~\cite{BF99}
to achieve a $(1.4 + \epsilon)$-approximation for $2$-{\sc Max-Duo}, for any $\epsilon > 0$.

It is worth mentioning that our vertex-degree reduction technique can be applied for $k$-{\sc Max-Duo} with $k \ge 3$.
In fact, we had worked out the details for $k = 3$, to reduce the maximum degree of the constructed instance graph from $12$ to $10$,
leading to a $(2.6 + \epsilon)$-approximation for $3$-{\sc Max-Duo}, for any $\epsilon > 0$.
Nevertheless, the $(2.6 + \epsilon)$-approximation is superseded by the $(2 + \epsilon)$-approximation for the general {\sc Max-Duo}~\cite{DGO17}.

It would be worthwhile to investigate whether the maximum degree can be further reduced to $3$,
by examining the structural properties associated with the degree-$4$ vertices.
On the other hand, it is also interesting to examine whether a better-than-$1.4$ approximation algorithm can be designed directly for the {\sc MIS} problem
on those degree-$4$ graphs obtained at the end of the vertex contracting process.

\subparagraph*{Acknowledgements.}
All authors are supported by NSERC Canada.
Additionally, 
Chen is supported by the NSFC Grants No. 11401149, 11571252 and 11571087, and the China Scholarship Council Grant No. 201508330054;
Liu is supported by the NSFC Grant Nos. 61370052 and 61370156.
Luo is supported by the NSFC Grant No. 71371129 and the PSF China Grant No. 2016M592680;
Lin and Zhang are supported by the NSFC Grant No. 61672323.





\end{document}